\def\@hangfrom@section#1#2#3{\@hangfrom{#1#2}#3}
\def\@hangfroms@section#1#2{#1#2}
\newtheorem{theorem}{Theorem}[section]
\newtheorem{lemma}[theorem]{Lemma}
\newcolumntype{M}[1]{>{\centering\arraybackslash}m{#1}}
\newcolumntype{N}{@{}m{0pt}@{}} 
\newcolumntype{x}[1]{>{\centering\arraybackslash\hspace{0pt}}p{#1}}
\newif\ifhyper
\def\BState{\State\hskip-\ALG@thistlm}
\begin{document}

\title{Random access codes via  quantum contextual redundancy}

\author{Giancarlo Gatti}
\thanks{gatti.gianc@gmail.com}
\affiliation{Department of Physical Chemistry, University of the Basque Country UPV/EHU, Apartado 644, 48080 Bilbao, Spain}
\affiliation{EHU Quantum Center, University of the Basque Country UPV/EHU}
\affiliation{Quantum MADS, Uribitarte Kalea 6, 48001 Bilbao, Spain}
\orcid{0000-0002-5867-3537}
\author{Daniel Huerga}
\thanks{Present address: Stewart Blusson Quantum Matter Institute, University of British Columbia, 2355 E Mall, Vancouver BC, V6T 1Z4 Canada; huerga.daniel@gmail.com}
\affiliation{Department of Physical Chemistry, University of the Basque Country UPV/EHU, Apartado 644, 48080 Bilbao, Spain}
\orcid{0000-0003-3210-2370}
\author{Enrique Solano}
\thanks{enr.solano@gmail.com}
\affiliation{Department of Physical Chemistry, University of the Basque Country UPV/EHU, Apartado 644, 48080 Bilbao, Spain}
\affiliation{International Center of Quantum Artificial Intelligence for Science and Technology (QuArtist) \\ and Department of Physics, Shanghai University, 200444 Shanghai, China}
\affiliation{IKERBASQUE, Basque Foundation for Science, Plaza Euskadi 5, 48009 Bilbao, Spain}
\affiliation{Kipu Quantum, Greifswalderstrasse 226, 10405 Berlin, Germany}
\orcid{0000-0002-8602-1181}
\author{Mikel Sanz}
\thanks{mikel.sanz@ehu.eus}
\affiliation{Department of Physical Chemistry, University of the Basque Country UPV/EHU, Apartado 644, 48080 Bilbao, Spain}
\affiliation{EHU Quantum Center, University of the Basque Country UPV/EHU}
\affiliation{IKERBASQUE, Basque Foundation for Science, Plaza Euskadi 5, 48009 Bilbao, Spain}
\affiliation{Basque Center for Applied Mathematics (BCAM), Alameda de Mazarredo 14, 48009 Bilbao, Basque Country, Spain}
\orcid{0000-0003-1615-9035}

\date{(Dated: January 11, 2023)}

\begin{abstract}

We propose a protocol to encode classical bits in the measurement statistics of {many-body Pauli observables, leveraging quantum correlations for a random access code. Measurement contexts built with these observables yield outcomes with intrinsic redundancy, something we exploit by encoding the data into a set of convenient context eigenstates. This allows to randomly access the encoded data with few resources. The eigenstates used are highly entangled and can be} generated by a discretely-parametrized quantum circuit {of low depth}.
Applications of this protocol include algorithms {requiring large-data storage with} only partial retrieval, as is the case of decision trees.
Using $n$-qubit states, this {Quantum Random Access Code has greater success probability than its classical counterpart for $n\ge 14$ and than previous Quantum Random Access Codes for $n \ge 16$. Furthermore, for $n\ge 18$, it can be amplified into a nearly-lossless} compression protocol { with success probability $0.999$ and compression ratio $O(n^2/2^n)$. The data it can store is equal to Google-Drive server capacity for $n= 44$, and to a brute-force solution for chess (what to do on any board configuration) for $n= 100$.}
\end{abstract}

\maketitle

{
\section{Introduction}
}

Redundancy in classical and quantum information is generally used towards error-correction and data compression. It allows efficient schemes to protect~\cite{shannon1948mathematical, huffman2012fundamentals} or compress~\cite{al2008novel} classical data. In quantum systems, different  strategies based on classical redundancy~\cite{calderbank1996good, steane1996error} or non-local storage of information have been proposed for compressing~\cite{rozema2014quantum} and error-correcting~\cite{gottesman1996class, kitaev2003fault} quantum data.

{Redundancy can arise from certain types of quantum measurements. As noted by the Kochen-Specker theorem~\cite{peres2006quantum}, if we define a measurement context with a closed set of commuting observables, not all imaginable outcomes are possible. The outcomes are correlated, so an enumeration of them is redundant. This can be exploited for an encoding where lucky bitstrings are mapped into fewer (quantum) resources, but the same could be said of any $\{0,1\}^{a} \rightarrow \{0,1\}^{a+b}$ Boolean function using classical systems only, so applications of this are not immediately evident.}

{There is precedent for encodings between classical and quantum channels} in quantum teleportation \cite{bennett1993teleporting} and superdense coding \cite{bennett1992communication}, {where their relative entanglement-assisted capacity is at most in a 2:1 ratio ($2$ bits are used to teleport $1$ qubit, or $1$ qubit can encode $2$ bits)} \cite{bennett2002entanglement}. In contrast to these digital encodings, a statistical approach could also be used, with intrinsic error probability.

{A Random Access Code (RAC) is precisely that.} It is a communication protocol in which a bitstring of data is encoded into less information units, and then a fragment of the original data is randomly accessed, retrieving it back with some success probability. Quantum and Entanglement-assisted Random Access Codes \cite{wiesner1983conjugate,ambainis1999dense,ambainis2002dense,pawlowski2010entanglement} use quantum systems to store the data, and achieve a slight quantum advantage for small scales, but with a fast-decaying success probability as quantum system size increases \cite{casaccino2008extrema, tavakoli2015quantum}. {In these quantum versions, the measurement-basis choice is an integral part of the data retrieval.}

In this Article, we propose a new kind of Quantum Random Access Code (QRAC) {enhanced by measurement-context redundancy. This \textit{contextual QRAC} encodes classical bits in $n$-qubit quantum states, and accesses the bits by choosing a measurement context of $n$-body Pauli observables. The quantum states used are eigenstates of these observables, chosen to minimize the sampling requirement for retrieval. We show a basic implementation of this QRAC, storing $40$ binary digits of $\pi/4$ in $4$-qubit states.} We then provide a statistical analysis {of performance metrics} for arbitrary $n$. We prove that, for large $n$, our compression ratio scales with $O(n^2/2^n)$ {and the asymptotic success probability is greater than $1/2$. More specifically, for $n \gtrsim 10$}, we store $O(3^n)$ bits in $O(n (3/2)^n)$ $n$-qubit states, allowing random access to any portion of the data at a time with a success probability that does not depend on $n$. {This way, we showcase success probability higher than RACs for $n \ge 14$ and higher than existing QRACs for $n\ge 16$. For $n \ge 18$, our code can be amplified to near-perfect success probability ($\mathcal{F}=0.999$) while maintaining compression (encoding the data into less information units).} 

\vspace{0.3cm}
{In section \ref{sec:Preliminaries}~\textbf{Preliminaries} we present a basic QRAC scheme and the measurement bases used in previous literature. In section \ref{sec:ContextualQRAC}~\textbf{Contextual QRAC} we explain our proposal, a QRAC based on many-body Pauli observables, and show how it can exploit contextual redundancy. In section \ref{sec:Countingcontexts}~\textbf{Counting contexts} we determine the number and size of complete sets of commuting observables of $n$-body Pauli observables, which are required to compute the asymptotic performance of our QRAC. In section \ref{sec:Implementation}~\textbf{Implementation}, we construct a $4$-qubit implementation. In section \ref{sec:Statisticalextrapolation}~\textbf{Statistical extrapolation} we use statistical models to compute the sampling requirement and success probability for large number of qubits. In section \ref{sec:WrappinguptheQRAC}~\textbf{Wrapping up the QRAC} we summarize the asymptotic results of the previous section and compute the compression ratio of our QRAC. In section \ref{sec:Comparison}~\textbf{Comparison} we compare against previous RACs and QRACs, and determine advantage regimes. In section \ref{sec:Applications}~\textbf{Applications} we explain potential applications for large-scale Contextual QRACs. In section \ref{sec:Conclusionandoutlook}~\textbf{Conclusion and outlook} we summarize the results and address future research lines towards scalability.
}

\vspace{0.5cm}
{
\section{Preliminaries}
\label{sec:Preliminaries}

\subsection{Quantum Random Access Code}
Consider a scenario where Alice has $m$ bits of data that she wants Bob to randomly access. However, Alice can only send $k$ two-level systems to Bob, with $m > k$. Once Alice finds a smart protocol to do this, what is the probability of Bob to recover an arbitrary piece of the data? This situation depicts a $m\rightarrow k$ RAC. For instance, in a $3 \rightarrow 1$ RAC Alice has three bits which she wants Bob to access with maximum success probability, but she can only communicate a one-bit message. A strategy for this code is to send the most common bit of the three. This way, averaging over all $8$ possible scenarios (three bits), Bob has a $0.75$ probability of guessing any one of the bits, although his guess will be the same no matter which of the three bits he is trying to guess. 

A QRAC  \cite{wiesner1983conjugate,ambainis1999dense,ambainis2002dense,pawlowski2010entanglement} is a RAC where the $k$ two-level systems sent are quantum systems. We maintain the $m \rightarrow k$ notation, but notice that it is incomplete to fully specify a QRAC, as it does not define the maximum entanglement size used or the amount of data Bob can access at a time, which is limited due to the destructive nature of quantum measurements. In general, a $m \rightarrow k$ QRAC can outperform the success probability of a $m \rightarrow k$ RAC. For instance, in a $3 \rightarrow 1$ QRAC Alice has three bits and is allowed to send a single qubit to Bob, on which he can perform measurements. A way to do this is for Alice to send one out of $8$ possible quantum states, depending on what her $3$-bit data is. The eligible states can be defined embedding a cube in Bloch's sphere with sides facing the axes. Each corner of the cube points at one quantum state with a preference for $\ket{+}$ over $\ket{-}$ (or vice versa), $\ket{L}$ over $\ket{R}$ (or vice versa) and $\ket{0}$ over $\ket{1}$ (or vice versa). Here, $\ket{\pm}=\tfrac{1}{\sqrt{2}}(\ket{0}\pm\ket{1})$, $\ket{L}=\tfrac{1}{\sqrt{2}}(\ket{0}+i\ket{1})$ and  $\ket{R}=\tfrac{1}{\sqrt{2}}(\ket{0}-i\ket{1})$. Then, Bob can measure the quantum state with observables $X$, $Y$ or $Z$, to discover one of the three preferences with a success probability of $0.789$. This way, Alice can send one qubit and allow Bob to discover one of her three bits with that success probability, which is greater than that of the $3 \rightarrow 1$ RAC. Note that this time, Bob performs different measurements depending on which bit he is trying to retrieve.

As can be seen, QRACs showcase quantum advantage over their classical counterpart when employing systems as small as one-qubit. An immediate question is the scalability of this advantage for larger quantum systems.

Casaccino et. al. \cite{casaccino2008extrema} have studied $(d+1) \,\text{dits} \rightarrow 1 \,\text{qudit}$ QRACs where Alice and Bob use $d$-level systems instead of bits and qubits, i.e. $(d+1) \,\text{log}_2(d) \rightarrow \text{log}_2(d)$ QRACs in bit-qubit notation. They computed the dit-retrieval success probability up to $d=8$, reproducing the aforementioned $3 \rightarrow 1$ QRAC (for $d=2$) which is known to have advantage. However, they also showed that the success probability decays as $d$ increases, which apparently argues against the usefulness of high-dimensional QRACs (we further discuss this in the Comparison section). 

Similarly, Tavakoli et. al. \cite{tavakoli2015quantum} studied $2 \,\text{dits} \rightarrow 1 \,\text{qudit}$ and $3 \,\text{dits} \rightarrow 1 \,\text{qudit}$ QRACs and compared against the respective RACs, showing some quantum advantage but with diminishing returns: the ratio of success probabilities (quantum divided by classical) decreases for larger ($d\gtrsim 6$) system sizes.

Success probabilities of another kind of high-dimensional QRAC have been computed (bounded) by Pauwels et. al. \cite{pauwels2022almost}, namely the $3 \,\text{trits} \rightarrow 1 \, \text{qudit}$ scenario with $d$ up to $20$. In this case, success probability increases with $d$, as Alice has a fixed amount of data and a message size which depends on $d$. Reasonably, we expect perfect success probability at $d=27$, which can store $3$ trits.

\begin{figure*}[t]
\centering
\hspace*{-0.7cm}
\includegraphics[scale=0.65]{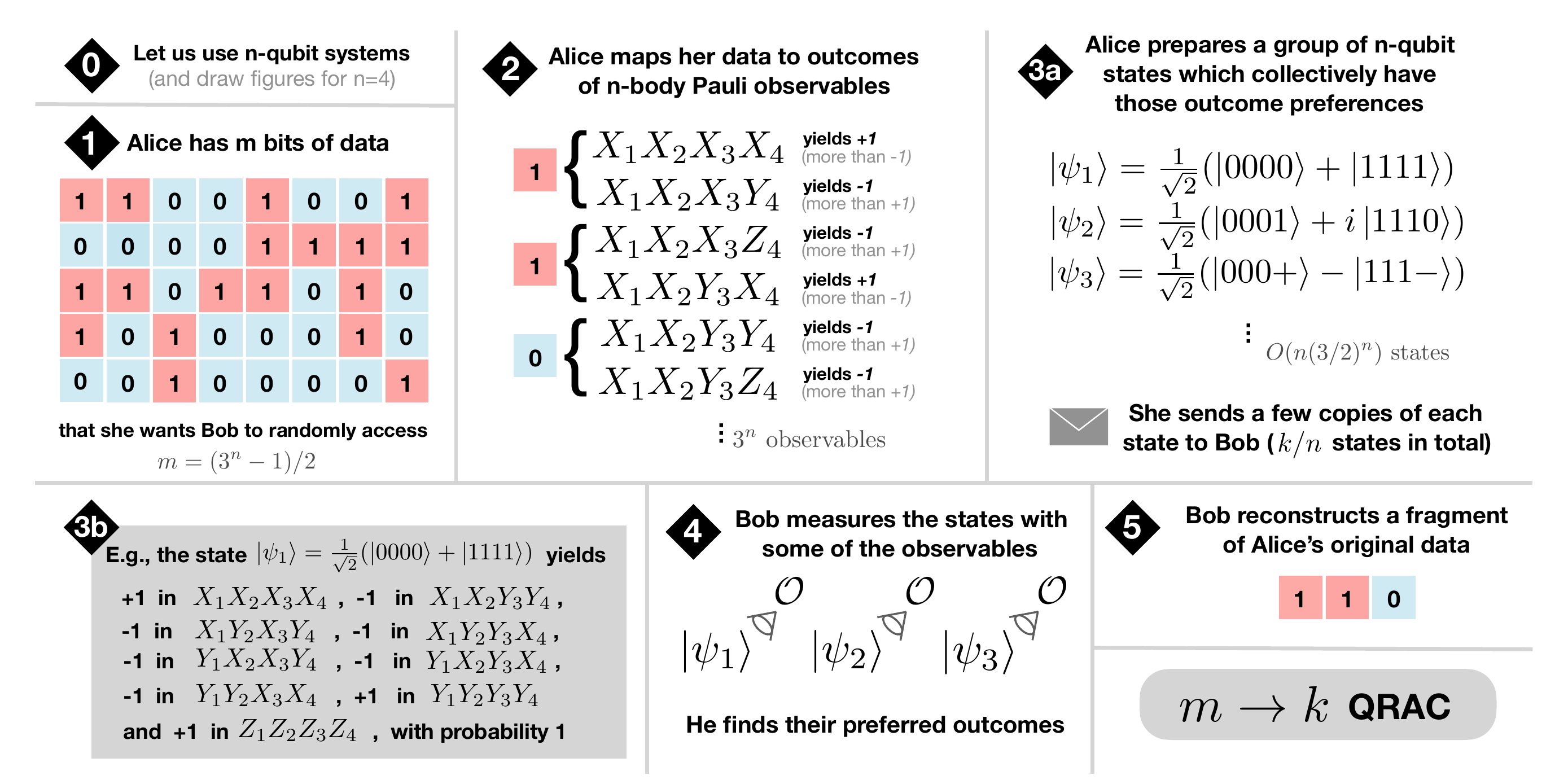}
\caption{Step-by-step summary of a $m\rightarrow k$ Quantum Random Access Code using $n$-qubit systems with $n$-body Pauli measurements. Alice encodes her data in the preferred outcomes of $n$-body Pauli observables, and sends to Bob a batch of $n$-qubit states which collectively yield that preference. In total, she sends $k/n$ states of $n$ qubits, counting all the different states sent and their copies. Bob retrieves a fragment of her data by measuring the states in some observables of his choosing.}
\label{protosummary}
\end{figure*}

\subsection{Mutually Unbiased Bases (MUB)}

In a QRAC, Bob chooses a basis to measure the quantum system sent by Alice. This choice depends on which of her bits (or dits) he wants to recover. So far, small and high-dimensional QRACs have considered Mutually Unbiased Bases (MUBs) as choices for Bob. We say that the eigenbases of observables $\mathcal{O}_A$ and $\mathcal{O}_B$ are mutually unbiased if all the eigenstates of $\mathcal{O}_A$ yield a homogeneous probability distribution when measured in $\mathcal{O}_B$. In other words, an eigenstate of $\mathcal{O}_A$ has no bias in the eigenbasis of $\mathcal{O}_B$ (and vice versa). The largest set of MUBs that can be formed grows linearly with the quantum system level $d$. Specifically, systems with prime-power level $d$ can be measured in $d+1$ MUBs~\cite{wootters1989optimal}, which includes any $n$-qubit system ($d=2^n$). 

Note that this determines the number of MUBs, but does not provide a way to obtain them. This result is what allows Casaccino et. al. \cite{casaccino2008extrema} to build QRACs where Alice has as many as $d+1$ dits and grants random access to one by sending a qudit to Bob, who picks one of the $d+1$ MUBs. However, the construction of these $(d+1) \,\text{dits} \rightarrow 1 \, \text{qudit}$ QRACs has proved to be a hard mathematical problem, analytically and computationally, and their asymptotic success probability is yet to be determined (we just know it for $d \le 8$). On high-dimensional $m \rightarrow k$ QRACs, the study of this regime of high compression ($k/m \ll 1$), where Alice has much more data than she sends to Bob, is still missing.

\section{Contextual QRAC}
\label{sec:ContextualQRAC}

In this Article, we propose the use of a different kind of observables, which appear in the field of contextuality: many-body Pauli observables. Unlike MUBs, which are unbiased and hard to compute, many-body Pauli observables share eigenstates between them but are a readily-defined set. Furthermore, unlike MUBs, each observable belongs to multiple measurement contexts. We employ these observables to build a $m \rightarrow k$ QRAC with states of arbitrary-$n$ qubits, which we present in this section.

\subsection{Overview}

We propose a special kind of $m\rightarrow k$ QRAC (Fig. \ref{protosummary}) based on many-body Pauli observables, where Alice has $m$ bits of data that she wants Bob to randomly access. To do this, she sends $N$ states of $n$ qubits, including duplicates that may be needed. The total number of qubits sent is $k=N n$, with maximum entanglement size $n$. We consider a fixed $n$ from which we determine $m$ and $k$, the latter also in terms of the code success probability. To randomly access Alice's data from these states, Bob measures with $n$-body Pauli observables}
\begin{equation}
\lbrace X,Y,Z\rbrace^{\otimes n}\text{.}
\label{paulis}
\end{equation}

These $3^n$ observables have two eigenvalues each, $\pm 1$, and thus are massively degenerate with dimension $2^{n-1}$. {We will refer to these outcomes as \textit{parities}, since canonical states measured in $Z^{\otimes n}$ yield $+1$($-1$) when they have an even(odd) number of $1$s (e.g. $\ket{1001}$ yields $+1$ and $\ket{1011}$ yields $-1$).}

This way, a set of states sent by Alice could communicate $3^n$ bits to Bob, defined by the preferred outcome of each observable. {For instance, in the $n=2$ case, if Alice sent a few copies of the states $\ket{\psi_1}=\ket{00}$, $\ket{\psi_2}=\ket{0L}$ and $\ket{\psi_3}=\tfrac{1}{\sqrt{2}}(\ket{+L}+\ket{-R})$, Bob could measure the states in one out of nine observables: $X_1 X_2$, $X_1 Y_2$, $X_1 Z_2$, $Y_1 X_2$, $Y_1 Y_2$, $Y_1 Z_2$, $Z_1 X_2$, $Z_1 Y_2$ or $Z_1 Z_2$. If he chose $Z_1 Z_2$, he would determine that $\ket{\psi_1}$ and $\ket{\psi_3}$ always output $+1$, i.e. they have a well-defined outcome, whereas $\ket{\psi_2}$ outputs both $+1$ and $-1$ with equal probability, i.e. it is unbiased. This way, Bob has learned that the preferred outcome of $Z_1 Z_2$ in this set of states is $+1$, equivalent to one bit of information. Other observables, such as $Z_1 X_2$, do not have a preferred outcome in this set of states.

We will now allow some freedom in our $m\rightarrow k$ QRAC. We choose Alice's data to be size $m=(3^n-1)/2$ by associating every bit to two Pauli observables instead of one, which leaves one of the $3^n$ observables out. This introduces a wiggle room in the encoding, to increase the success probability of our code. In the end, if this probability is high enough, our main metric of interest becomes the compression ratio $k/m$. 

Also note that the $m$ bits held by Alice are completely arbitrary, so we can assume this data to already be in its classical lossless-compression limit. Consequently, if the success probability of our QRAC aproaches $1$, any compression ratio $k/m<1$ that we achieve will be on top of the classical limit, i.e. a quantum advantage.}

{
\subsection{Contextuality}

We have proposed to use a group of $n$-qubit states $\{\ket{\psi_i}\}_i$ to store data in their collective $n$-body Pauli statistics, that is, in whether each of the $3^n$ observables yields preference for $+1$ or $-1$ outcomes (\textit{parities}). However, as studied in the field of quantum contextuality, there exist restrictions that outcomes of a complete set of commuting observables must adhere to, like the ones presented in Fig. \ref{MagicSquare}. As a rule of thumb, any $n$-qubit state should be able to yield only $2^n$ different outcomes. Thus, if a measurement context is defined by more than $n$ (commuting) observables with $1$-bit outputs, as is the case for $n$-body Pauli observables, they are necessarily correlated, and the information given by enumerating all the outputs of the context is redundant (\textit{contextual redundancy}). 

In this section, we show that all $2^{3^n}$ preferred parities (two options per observable) can be produced by the collective statistics of a set of $n$-qubit states $\{\ket{\psi_i}\}_i$. Afterwards, we establish a sampling requirement metric for mixed states, which proves useful despite our QRAC not using mixed states. Then, we introduce \textit{context eigenstates} as convenient states $\{\ket{\psi_i}\}_i$ for the encoding. Finally, we show in a $2$-qubit simulation that some preferred parities are more prevalent in the mixed-state phase space and are easier to sample from, particularly those satisfying more Magic Square restrictions (Fig. \ref{MagicSquare}).}

{
\subsubsection{Producing all preferred parities}
}

\begin{theorem}
{A set of $n$-qubit states $\{\ket{\psi_i}\}_i$ can have any collective (majority) bias in the outcomes of $n$-body Pauli observables.}
\end{theorem}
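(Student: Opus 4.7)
The plan is to exhibit, for any target sign assignment $s:\{X,Y,Z\}^{\otimes n}\to\{+1,-1\}$, an $n$-qubit mixed state $\rho$ whose Pauli expectations all carry the prescribed sign, and then to decompose $\rho$ as a uniform mixture of pure eigenstates. The resulting ensemble $\{\ket{\psi_i}\}_i$ realises the claimed collective bias, since the majority outcome obtained by picking a state uniformly at random from the ensemble and measuring $Q$ is determined by the sign of $\mathrm{tr}(\rho Q)$.

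First, I would introduce, for each $n$-body Pauli $P$ with target sign $s_P$, the density matrix
\begin{equation}
\rho_P \;=\; \frac{I + s_P P}{2^n},
\end{equation}
which is the maximally mixed state on the $s_P$-eigenspace of $P$ and, equivalently, a uniform mixture of $2^{n-1}$ pure eigenvectors $\ket{\phi_{P,k}}$ of $P$ with eigenvalue $s_P$. Using Hilbert--Schmidt orthogonality, $\mathrm{tr}(PQ) = 2^n \delta_{P,Q}$ for all Paulis $P,Q\in\{I,X,Y,Z\}^{\otimes n}$, a one-line computation gives $\mathrm{tr}(\rho_P Q)=s_P\delta_{P,Q}$ for every non-identity $Q$.

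Next, I would take the uniform average over targets,
\begin{equation}
\rho \;=\; \frac{1}{3^n}\sum_{P\in\{X,Y,Z\}^{\otimes n}}\rho_P,
\end{equation}
which is a valid density matrix by convexity. Linearity of the trace and the previous identity give $\mathrm{tr}(\rho Q)=s_Q/3^n$ for every $n$-body Pauli $Q$, so measuring $Q$ on $\rho$ yields outcome $s_Q$ with probability $(1+1/3^n)/2 > 1/2$. A strictly positive majority bias toward $s_Q$ therefore holds simultaneously for all $3^n$ observables. Taking $\{\ket{\psi_i}\}_i$ to be the collection of all $\ket{\phi_{P,k}}$ with equal multiplicities reproduces $\rho$ as an ensemble average and so realises the prescribed parity pattern.

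I do not expect a hard step: once Pauli orthogonality is invoked, the whole argument is one line of linearity. The only conceptual caveat worth flagging is that the construction achieves the correct sign of the bias but only with exponentially small magnitude $1/3^n$; amplifying that bias so that it can be resolved with few measurement samples is a separate problem, and is precisely what motivates the \emph{context eigenstate} construction developed in the subsequent subsections.
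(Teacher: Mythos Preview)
Your proof is correct and follows essentially the same route as the paper: build a uniform mixture over the $3^n$ Pauli observables, where the component indexed by $P$ is biased toward $s_P$ and unbiased on every other $n$-body Pauli, so that the average inherits the prescribed sign on each observable with magnitude $1/3^n$. The only cosmetic difference is the choice of per-Pauli building block: the paper uses a single \emph{product} eigenstate (e.g.\ $\ket{+}^{\otimes n}$ for $X^{\otimes n}$), which is already unbiased on all other $n$-body Paulis, whereas you use the maximally mixed eigenspace projector $\rho_P=(I+s_P P)/2^n$ and invoke Hilbert--Schmidt orthogonality; your version is algebraically cleaner, the paper's uses $2^{n-1}$ times fewer pure states in the resulting ensemble, and both yield the same bias and the same caveat about its exponentially small magnitude.
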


\begin{proof}
{The collective preferred parities of $N$ states $\{\ket{\psi_i}\}_i$ can be computed by the homogeneous mixed state $\rho=\tfrac{1}{N}\sum_i \ket{\psi_i} \bra{\psi_i}$, and $\rho$ can also be used to measure how strong the bias is. Trivially, we can show that in a sufficiently-large collection of states,} all preferred parities are possible, by considering a mixed state $\rho_\text{test}$ of $3^n$ states $\ket{\phi_j}$, each one eigenstate to a single $n$-body Pauli observable and unbiased for the rest. This allows independent control for the outcome preference of each observable. E.g., the state $\ket{++++}$ ($\ket{+++-}$) is eigenstate to $X_1 X_2 X_3 X_4$ with eigenvalue $+1$ ($-1$), and both states yield unbiased statistics on all other $4$-body Pauli observables. The associated mixed state $\rho_\text{test}=\tfrac{1}{3^n}\sum_j \ket{\phi_j} \bra{\phi_j}$ can yield any of the $2^{3^n}$ preferred parities.
\end{proof}

Note, however, that the example above yields a mixed state with probability distribution $\{1/2\pm 1/3^n,1/2 \mp 1/3^n\}$ (for $\{+1,-1\}$ outcomes, respectively) on all $n$-body Pauli observables. Intuitively, these distributions have a weak bias, such that determining the preferred parity requires many samples of the state. 

{
\subsubsection{Sampling requirement metric}

The retrieval protocol of our QRAC does not use mixed states, but instead allows Bob to measure a number of copies of each state $\ket{\psi_i}$ individually. However, a sampling requirement metric based on mixed states is convenient, as it can be used for comparative purposes. Given a set of $N$ states $\{\ket{\psi_i}\}_i$, the homogeneous mixed state $\rho=\tfrac{1}{N}\sum_i \ket{\psi_i} \bra{\psi_i}$ has a probability distribution $\{p_\mathcal{O}^+,p_\mathcal{O}^-\}$ for each $n$-body Pauli observable. For each of them, we can compute the required number of samples to guarantee with success probability $f$ that the majority yield the preferred parity of $\mathcal{O}$,
\begin{equation}
S_\text{mix}(p_\mathcal{O},f)= \underset{s\in \mathbb{N}}{\text{arg min}} \left(\sum_{j=\left \lfloor{s/2}\right \rfloor +1}^s B(s,p_\mathcal{O};j) >f\right),
\label{eq:SR_mixed}
\end{equation}
where $p_\mathcal{O}=\max(p_\mathcal{O}^+,p_\mathcal{O}^-)$ and $B(s,p_\mathcal{O};j)={s \choose j } p_\mathcal{O}^j (1-p_\mathcal{O})^{s-j}$ is the binomial density function for $j$ successes in $s$ shots with probability $p_\mathcal{O}$. Then, this sampling requirement can be averaged over all $3^n$ observables, obtaining $\langle S_\text{mix}\rangle$, which is a general sampling requirement metric for the homogeneous mixed state of $\{\ket{\psi_i}\}_i$ in all $n$-body Pauli observables.}

{
\subsubsection{Context eigenstates}

We have proved that all preferred parities can be produced, but a more efficient construction with lower sampling requirement is still required. For an efficient encoding of our QRAC, instead of using states which are eigenstate to a single observable and unbiased in the rest, we use states which are eigenstates to as many observables as possible (and unbiased in the rest). To do this, we must consider sets of commuting $n$-body Pauli observables. In general, commuting operators have common eigenbases. This eigenbasis is unique in the case of complete sets of commuting $n$-body Pauli observables. We call these commuting sets \textit{contexts} --short for \textit{measurement contexts}--, and refer to the common set of eigenstates as context eigenstates. Within the limits of $n$-body Pauli observables, context eigenstates are unbiased outside of their context.

Context eigenstates of two qubits have eigenvalues restricted by the same conditions of the Mermin-Peres Magic Square. For instance, the Bell state $\tfrac{1}{\sqrt{2}}(\ket{00}+\ket{11})$ is a context eigenstate to the context $\{X_1 X_2, Y_1 Y_2, Z_1 Z_2\}$, with respective eigenvalues $\{+1,-1,+1\}$. The rest of the Bell basis shares the same context, with eigenvalues that also multiply to $-1$. On the other hand, the $i$-phase Bell state $\tfrac{1}{\sqrt{2}}(\ket{00}+i\ket{11})$ is a context eigenstate to the context $\{X_1 Y_2, Y_1 X_2, Z_1 Z_2\}$, with respective eigenvalues $\{+1,+1,+1\}$. The rest of the $i$-phase Bell basis shares the same context, with eigenvalues that also multiply to $+1$. Note that --in contrast to MUBs-- an observable can belong to multiple contexts, such that Alice can express the parities she wants with the most convenient context eigenstates.}



\begin{figure}[t]
\centering
\hspace*{-0.31cm}
\includegraphics[scale=0.65]{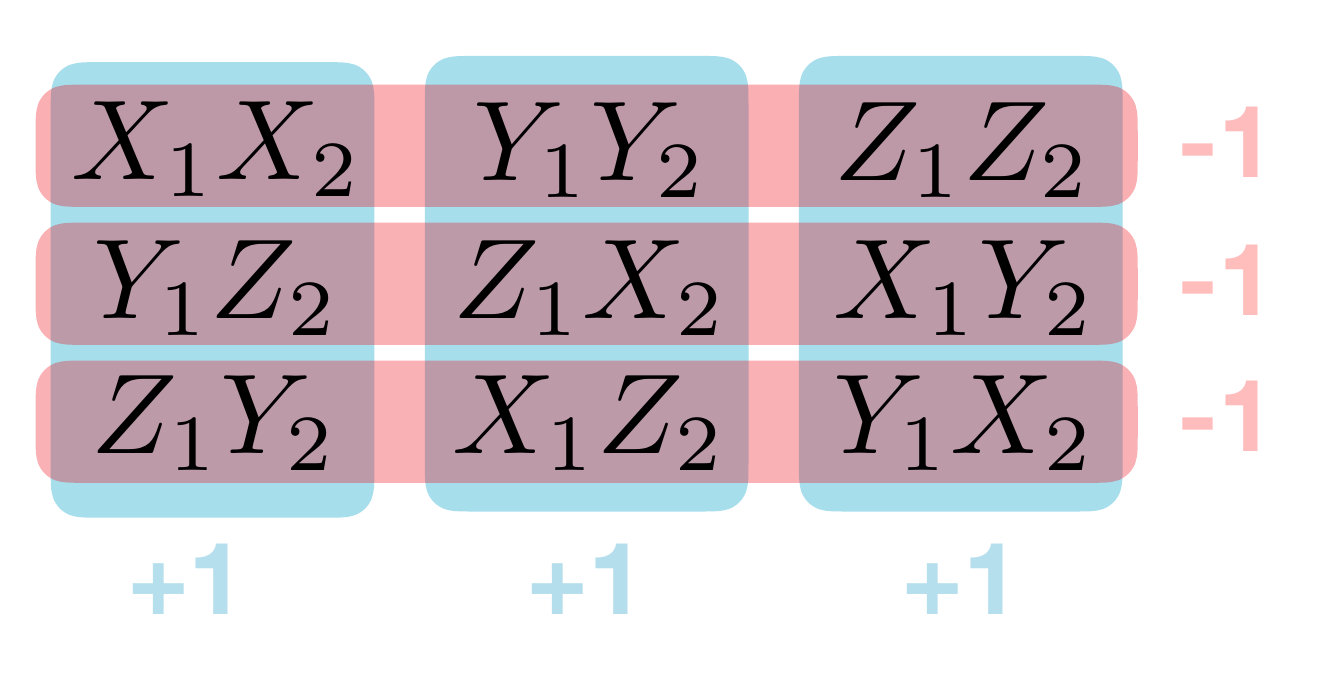}
\caption{A version of the Mermin-Peres Magic Square~\cite{peres2006quantum} using two-body Pauli observables. This grid of observables is organized in such a manner that any row or column is a complete set of commuting observables, a \textit{measurement context}. Any quantum state measured in one of these rows(columns) will collapse onto a context eigenstate and yield outcomes which multiply to $-1$($+1$). In other words, not all imaginable outcomes of a measurement context are possible; there are restrictions.}
\label{MagicSquare}
\end{figure}

{
\subsubsection{Contextual redundancy}

Consider a configuration of preferred parities on all $3^n$ observables, that is, $+1$ or $-1$ biases that we want on each $n$-body Pauli observable. For $n=2$, due to the Mermin-Peres Magic Square, one would expect that preferred parities that satisfy more measurement context restrictions (the row or column outcome product) should be possible to express with context eigenstates instead of eigenstates to individual observables. This would produce probability distributions with more extreme biases, reducing the sampling requirement for retrieval. We test this conjecture by producing random $2$-qubit mixed states and classifying them according to their preferred parities, which have $2^{3^2}=512$ configurations. In the horizontal axis of Fig. \ref{samplesversusoccurrences}, we count how often each preferred-parity configuration occurs. In the vertical axis of Fig. \ref{samplesversusoccurrences}, we compute the mixed-state sampling requirement metric $\langle S_\text{mix}\rangle$ and show the smallest value reached by each preferred-parity configuration. Notably, the preferred parities occur more often and have lower sampling requirements when they satisfy more Magic Square restrictions, particularly if all rows or all columns are satisfied.

\begin{figure}[t]
\centering
\hspace*{-0.50cm}
\includegraphics[scale=0.38]{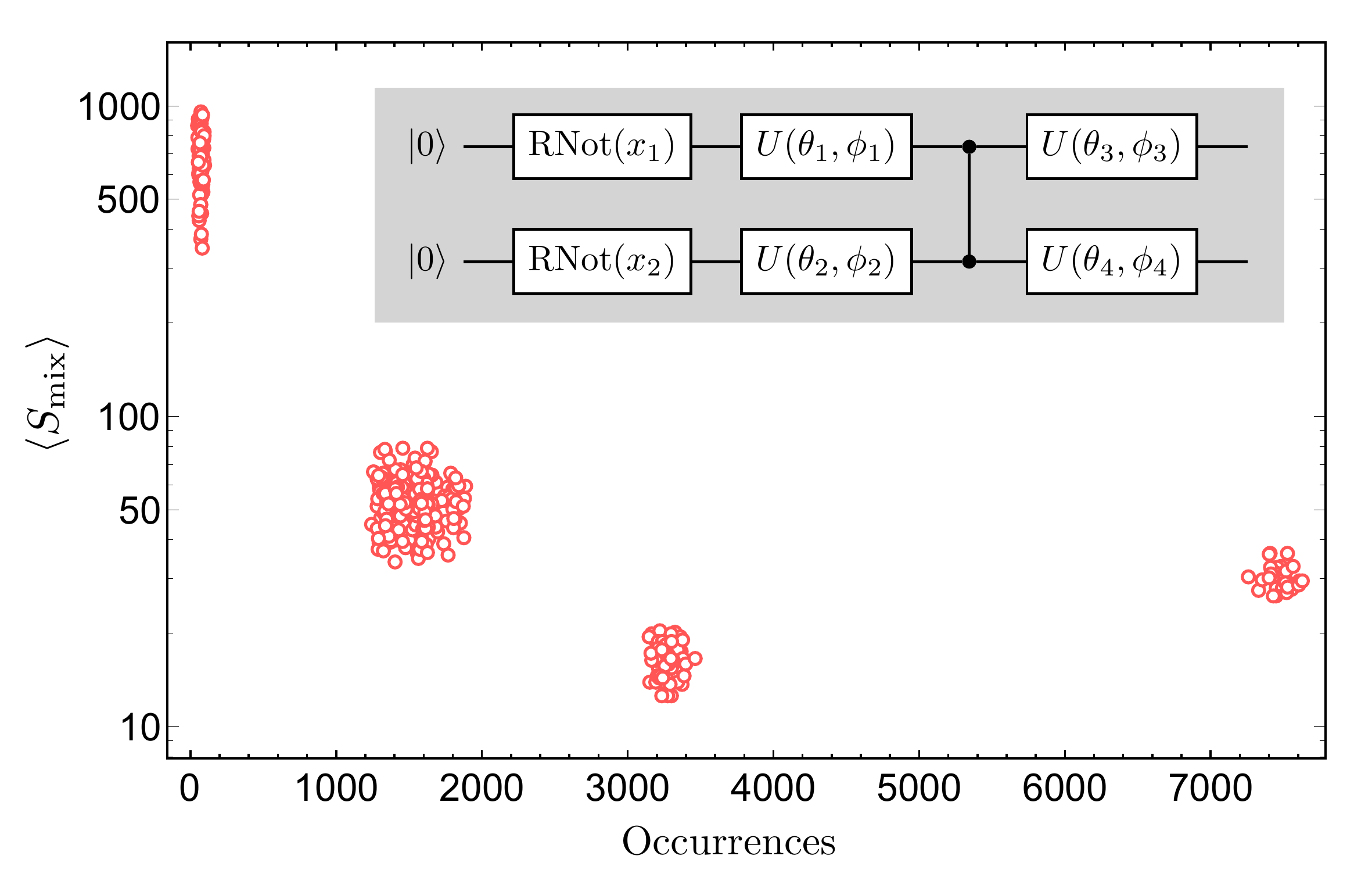}
\put(-217,138){\textbf{$1$} \textbf{R}}
\put(-190,89){\textbf{$3$} \textbf{R}}
\put(-124,42){\textbf{$5$} \textbf{R}$^{\,\star}$}
\put(-41,44){\textbf{$3$} \textbf{R}$^{\,\star}$}
\caption{{Using a parametrized circuit (shown in inset), $10^6$ random $2$-qubit mixed states were generated and classified according to their $2$-body-Pauli preferred outcomes ($512$ possible configurations), counting the number of occurrences of each configuration and taking note of the lowest average sampling requirement $\langle S_\text{mix}\rangle$ ($f=0.95$) of each configuration. We plot these two values, obtaining a graph with $512$ points.} In the quantum circuit, $U(\theta,\phi)=R_z (\phi) R_x (-\pi/2) R_z (\theta) R_x (\pi/2)$, with $\theta \in [0,\pi]$, $\phi \in [0,2 \pi]$. Classical randomness to generate mixed states is introduced with a \textit{random-not} gate: $\text{RNot}(x)=X$ with probability $x$, and RNot$(x)=\mathbb{1}$ otherwise. {The preferred parities are labeled according to the total number of Magic Square restrictions satisfied ($1 \text{ \textbf{R}}$, $3 \text{ \textbf{R}}$ or $5 \text{ \textbf{R}}$) and marked with a star when all rows or all columns are satisfied. Notably, the preferred parities cluster according to these labels.}}
\label{samplesversusoccurrences}
\end{figure}

Although this test has been performed for $n=2$, a general conclusion can be drawn: not all $2^{3^n}$ preferred parities are equally prevalent on the phase space of $n$-qubit mixed states, and some preferred parities allow for probability distributions which are easier to sample from. Thus, in the following section, we employ the wiggle-room granted by our $2$-to-$1$ encoding to map our data into the most convenient preferred parities. Specifically, preferred parities that can be expressed with context eigenstates.
}

{
\subsection{Encoding}
}

\begin{figure}
\hspace*{-0.40cm}
\begin{adjustbox}{width=0.50\textwidth}
\begin{quantikz}\lstick{$\ket{0}$} & \gate{H} & \ctrl{1} & \qw & \gate{S^\star(\alpha_0)} & \gate{Z^\star(\alpha_1)} & \gate{\Gamma(\beta_1)} & \qw \\
\lstick{$\ket{0}$} & \qw & \targ{} & \ctrl{1} & \qw & \gate{X^\star(\alpha_2)} & \gate{\Gamma(\beta_2)} & \qw \\
\lstick{$\ket{0}$}&\qw &\qw &\targ{} &\ctrl{1} & \gate{X^\star(\alpha_3)} & \gate{\Gamma(\beta_3)} & \qw\\
\lstick{$\ket{0}$}&\qw &\qw & \qw &\targ{} & \gate{X^\star(\alpha_4)} & \gate{\Gamma(\beta_4)} & \qw
\end{quantikz}
\end{adjustbox}
\caption{{Discretely-parametrized quantum circuit to generate eigenstates of contexts of maximum size for $n=4$. Depending on its parameters, this circuit constructs a GHZ or $i$-phase GHZ state and applies $X$ and $Z$ gates, as well as Pauli-basis transformations, allowing us to produce any context eigenstate.} Here, $H$ is the Hadamard gate, and $\alpha_i=\{0,1\}$ and $\beta_i=\{0,1,2\}$ are discrete parameters. The gates are defined such that $S^\star (0)=Z^\star (0)=X^\star (0)=\Gamma(0)=\mathbb{1}$ and $S^\star (1)=S=\ket{0}\bra{0} + i \ket{1}\bra{1}$, $Z^\star (1)=Z$, $X^\star (1)=X$, $\Gamma(1)=H.Z.S$ and $\Gamma(2)=S.H$. {This circuit layout can be easily extrapolated to arbitrary $n$, showcasing  $2\times 6^n$ parameter configurations ($n+1$ $\alpha$-parameters and $n$ $\beta$-parameters), which matches the count of $2\times 3^n$ contexts with $2^n$ eigenstates each. For even $n>2$, all parameter configurations can be shown to produce different quantum states. Finally, note that only $n-1$ gates of $2$-qubits are used, so the depth of the circuit is at worst linear.}}
\label{maincircuit}
\end{figure}
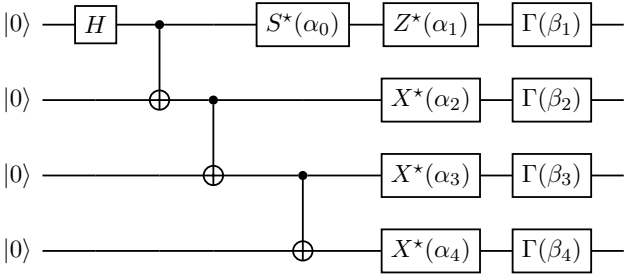

{In our code, Alice has $m=(3^n-1)/2$ bits of data $\bar{b}=\left\{ b_\ell \right\}_\ell$ that she encodes into preferred parities of $n$-body Pauli observables.} This is a $2$-to-$1$ mapping between data and observables, which allows her to have some wiggle room in choosing preferred parities. To specify which data bit corresponds to which observable pair, an ordered set of observable-couples $\mathcal{G}=\{\{\mathcal{O}_1,\mathcal{O}_2\},\{\mathcal{O}_3,\mathcal{O}_4\},...\}$ is agreed beforehand between Alice and Bob. {The order itself is irrelevant, but it is convenient for retrieval purposes that the couples commute.} This way, the $\ell$-th data bit $b_\ell$ is stored in the $\ell$-th couple of observables $\mathcal{G}_\ell$, leaving one observable out. Without loss of generality, we identify bit $b_\ell=0~(1)$ with equal ($=$) and different ($\ne$) preferred parities, respectively, when measuring the corresponding pair of observables. Each bit of data can be encoded in two ways, so Alice can choose from $2^m$ compatible preferred parities to represent her data, which is a super-exponential degree of freedom in $n$. To reduce the retrieval sampling requirement, Alice considers the preferred parities compatible with her data, and chooses the one with most redundant contexts, i.e. statistics resembling those of context eigenstates. {In the Implementation section, we do this for the $n=4$ case by constructing and maximizing a scoring function based on Hamming distances.} 

{Finally, Alice computes a selection of context eigenstates to send to Bob with collective statistics resembling her target preferred parities. These states can be generated and measured with discretely parametrized quantum circuits of low depth, as shown in Fig.~\ref{maincircuit} and Fig. \ref{measurecircuit}. By computing the measurement statistics in terms of the parameters, states are chosen such that the selection size and retrieval sampling requirement are minimized}. An encoding example is given in Fig.~\ref{didacticencoding}.

\vspace*{0.5cm}
\subsection{Retrieval}

\begin{figure}
\centering
\hspace*{-0.50cm}
\begin{adjustbox}{width=0.39\textwidth}
\begin{quantikz}\lstick[4]{$\ket{\psi}$} & \gate{\Gamma(\beta_1)} & \ctrl{4} & \qw & \qw & \qw & \qw \\
\qw & \gate{\Gamma(\beta_2)} & \qw  & \ctrl{3} & \qw & \qw & \qw \\
\qw & \gate{\Gamma(\beta_3)} &\qw &\qw &\ctrl{2} & \qw & \qw\\
\qw & \gate{\Gamma(\beta_4)} &\qw & \qw &\qw  &\ctrl{1} & \qw\\
\lstick{$\ket{0}$} & \qw & \targ{} & \targ{} & \targ{} & \targ{} & \meter{}
\end{quantikz}
\end{adjustbox}
\caption{{Discretely-parametrized quantum circuit to measure an $n$-qubit state $\ket{\psi}$ with one $n$-body Pauli observable using one ancilla qubit, illustrated for $n=4$. Here, $\beta_i=\{0,1,2\}$ are discrete parameters and the gates are defined as $\Gamma(0)=\mathbb{1}$, $\Gamma(1)=H.Z.S$ and $\Gamma(2)=S.H$, such that they perform transformations between Pauli bases. The circuit can be easily extrapolated to arbitrary $n$, showcasing $3^n$ parameter configurations ($n$ $\beta$-parameters), i.e. one per observable.}}
\label{measurecircuit}
\end{figure}
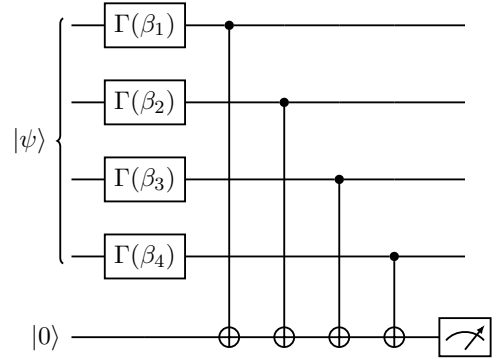

In this protocol, Bob sequentially measures the selection of states sent by Alice in an observable $\mathcal{O}$ of his choosing, filtering out those with unbiased statistics. {This measurement is performed with the circuit in Fig.~\ref{measurecircuit}. Recall that Alice sends context eigenstates, and thus they yield outcomes with either well-defined or fully unbiased probability distributions. In $T$ measurement steps, Bob measures at most that many copies of each state. We consider a sufficiently large $T$ to} statistically distinguish well-defined from unbiased parities. {If the prepared states are ideal, they are filtered out if they yield a changing outcome at any point, since Bob can immediately conclude that those states are not well-defined, as shown in Fig.~\ref{didactretrieval}. In that case, unbiased states are filtered out exponentially fast. On the other hand, if the prepared states are noisy, a threshold to distinguish nearly-well-defined from almost-unbiased states must be considered, and the number of required samples would increase. Both for ideal or noisy states, once this filtering is performed,} the most frequent outcome among unfiltered states defines the preferred parity retrieved by Bob.

Repeating the measuring process for part(all) of the observables, the original data can be partially(completely) retrieved. Note that commuting observables can be measured using the same samples.

\begin{figure}[t]
\hspace*{-0.30cm}
\includegraphics[width={0.50 \textwidth}]{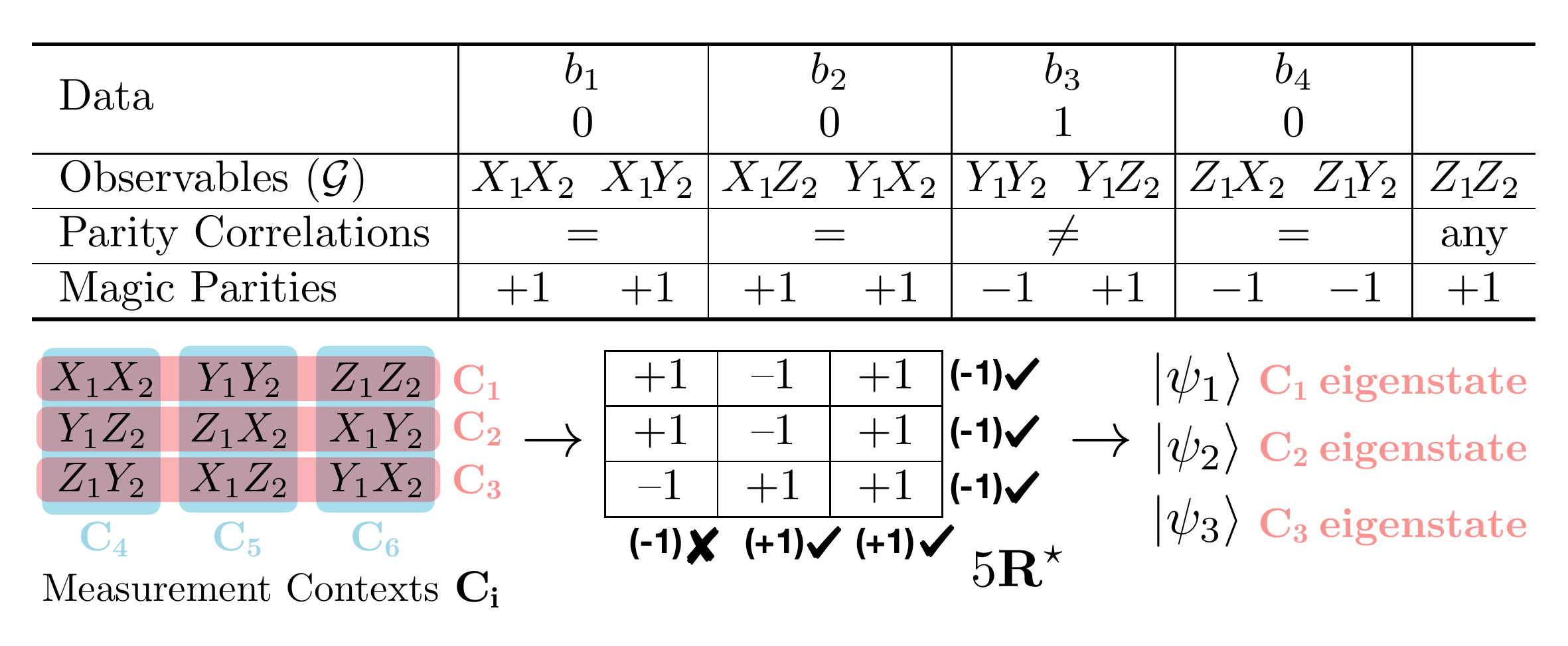}
\caption{Example of a $4$-bitstring encoded by three two-qubit context eigenstates: $\ket{\psi_1}=\tfrac{1}{\sqrt{2}}(\ket{00}+\ket{11})$, $\ket{\psi_2}=\tfrac{1}{\sqrt{2}}(\ket{L0}-\ket{R1})$
and 
$\ket{\psi_3}=\tfrac{1}{\sqrt{2}}(\ket{0R}+\ket{1L})$, 
where 
$\ket{\pm}=\tfrac{1}{\sqrt{2}}(\ket{0}\pm\ket{1})$, $\ket{R}=\tfrac{1}{\sqrt{2}}(\ket{0}- i \ket{1})$ and $\ket{L}=\tfrac{1}{\sqrt{2}}(\ket{0}+i \ket{1})$.
Let Alice's data be $\bar{b}=0010$ and let the order of observable-couples $\mathcal{G}$ agreed beforehand be the alphabetical order. This means that $X_1 X_2$ and $X_1 Y_2$ must have equal preferred parities (bit 0), $X_1 Z_2$ and $Y_1 X_2$ must have equal preferred parities (bit 0), $Y_1 Y_2$ and $Y_1 Z_2$ must have different preferred parities (bit 1), $Z_1 X_2$ and $Z_1 Y_2$ must have equal preferred parities (bit 0), and $Z_1 Z_2$ may have any parity. There are several  preferred parities compatible with the data, but Alice chooses one which satisfies five Magic-Square restrictions, allowing her to express the $9$ preferred parities with three row context eigenstates.}
\label{didacticencoding}
\end{figure}

{In the Statistical Extrapolation section, we consider ideal states to model the required number of samples for any number of qubits $n$. In this scenario, the only error source in the retrieval are the remaining unfiltered unbiased states.}

{
\section{Counting contexts}
\label{sec:Countingcontexts}

To analyze the performance of our QRAC for arbitrary number of qubits $n$, we need to count the number of available contexts and context eigenstates. For simplicity, we only consider even $n$, as odd-sized Pauli observables introduce additional considerations.

Using $n$-body Pauli observables, the number of contexts and their sizes are trivial for $n=2$. In the Mermin-Peres Magic Square, there are $6$ contexts (complete sets of commuting observables), and each one has $3$ observables. For $n=4$, there are $270$ contexts of size $9$ (which we prove later in this section). However, for $n>4$, contexts no longer have a unique size.}

{Large context sizes define eigenstates with well-defined parities in several observables simultaneously, so for arbitrary $n$ we focus on counting and constructing the largest contexts possible. This way, we prove, count and construct contexts of size $2^{n-1}+1$, which are of maximum size to the best of our knowledge when contrasted with numerical simulations of up to $n=20$.

\begin{figure}[t]
\centering
\includegraphics[width={0.46 \textwidth}]{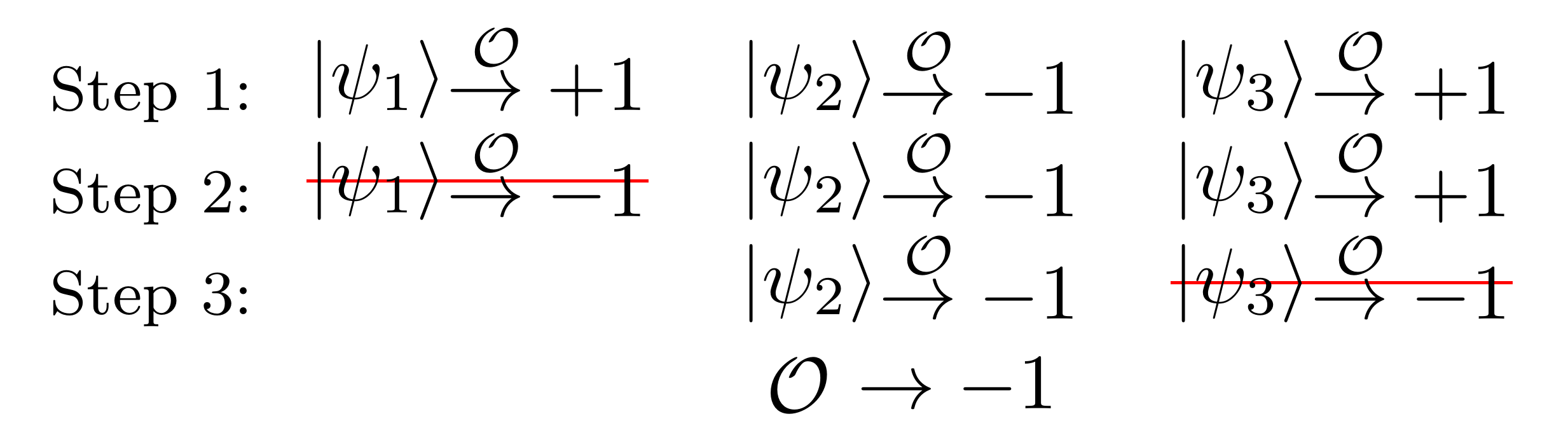}
\caption{Retrieval protocol for the preferred parity of observable $\mathcal{O}=Z_1 X_2$ from states $\ket{\psi_1}$, $\ket{\psi_2}$ and $\ket{\psi_3}$ of Fig.~\ref{didacticencoding}, with three measurements steps (requiring at most three copies of each state). The protocol consists in filtering out eigenstates with unbiased statistics in $\mathcal{O}$ by performing successive measurements and discarding those that change their outcome. Both states $\ket{\psi_1}$ and $\ket{\psi_3}$ have unbiased probability distributions when measured in $Z_1 X_2$, which is why they are likely to eventually change their outcome, whereas $\ket{\psi_2}$ always yields $-1$. In general, the expected number of unbiased states that fail to be filtered decreases exponentially in the number of retrieval steps.}
\label{didactretrieval}
\end{figure}

\begin{lemma}
For $n$-body Pauli observables with even $n$, each observable can be associated to two contexts of size $2^{n-1}+1$.
\label{thelemma1}
\end{lemma}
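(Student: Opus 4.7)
The plan is to prove the lemma for the canonical observable $X^{\otimes n}$ by constructing two distinct Lagrangian subgroups that contain it, and then bootstrap to an arbitrary observable by local Clifford conjugation. I would work modulo phases, identifying each $n$-qubit Pauli with a vector in $\mathbb{F}_2^{2n}$ so that commutativity is the symplectic inner product and a context of full-weight commuting observables is the full-weight part of a Lagrangian subgroup of order $2^n$.

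For $X^{\otimes n}$ I would take the GHZ stabilizer
\[
\mathcal{A}_1 = \langle X^{\otimes n},\, Z_1Z_2,\, Z_2Z_3,\, \ldots,\, Z_{n-1}Z_n\rangle
\]
as the first context, together with its conjugate $\mathcal{A}_2 = \sqrt{X}_1\, \mathcal{A}_1\, \sqrt{X}_1^\dagger$. Since $\sqrt{X}_1$ is a single-qubit Clifford that fixes $X$ and swaps $Y\leftrightarrow Z$ at qubit $1$, it fixes $X^{\otimes n}$ but sends $Z_1Z_2\mapsto Y_1Z_2$, so $\mathcal{A}_2 = \langle X^{\otimes n},\, Y_1Z_2,\, Z_2Z_3,\, \ldots,\, Z_{n-1}Z_n\rangle$ is again abelian of order $2^n$ and distinct from $\mathcal{A}_1$: every non-identity element of $\mathcal{A}_2$ acts as $X$ or $Y$ at qubit $1$, so $Z_1Z_2\in\mathcal{A}_1$ cannot lie in $\mathcal{A}_2$.

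The core computation is to count full-weight elements in $\mathcal{A}_1$; the count for $\mathcal{A}_2$ will then follow because local Cliffords permute single-qubit Paulis and fix the identity, preserving the full-weight property. The subgroup generated by the $Z_iZ_{i+1}$ equals the set of $2^{n-1}$ even-weight $Z$-strings, and since $n$ is even, its only full-weight member is $Z^{\otimes n}$. The coset $X^{\otimes n}\cdot \langle Z_iZ_{i+1}\rangle$ contains $2^{n-1}$ strings, each built qubit-wise from $X\cdot I = X$ and $X\cdot Z \propto Y$, hence every one is full-weight. Summing gives $1 + 2^{n-1}$ full-weight elements.

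Finally, for an arbitrary $P = P_1\otimes\cdots\otimes P_n$, I would choose single-qubit Cliffords $C_i$ with $C_iP_iC_i^\dagger = X$, set $C = \bigotimes_i C_i$, and take $C^\dagger\mathcal{A}_1 C$ and $C^\dagger\mathcal{A}_2 C$ as the two contexts containing $P$. The step most deserving of care is the distinctness argument for $\mathcal{A}_1$ and $\mathcal{A}_2$: one has to confirm that $\sqrt{X}_1$ does not lie in the normalizer of $\mathcal{A}_1$ in the Clifford group, equivalently that $Y_1Z_2$ is genuinely absent from $\mathcal{A}_1$, so that conjugation produces a bona fide second context rather than recycling the first. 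The use of even $n$ enters in exactly one place, namely ensuring that the all-ones string lies in the even-weight $Z$-subgroup, so that $Z^{\otimes n}$ survives as the lone full-weight element of the $Z$-only coset.
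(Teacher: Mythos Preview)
Your argument is essentially correct and establishes the lemma, but it follows a different route from the paper and contains one slip worth flagging.

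\textbf{Comparison with the paper.} The paper fixes $\mathcal{O}_0$ (say $Z^{\otimes n}$) and takes as its two contexts $\{\mathcal{O}_0\}\cup\mathcal{C}'_1$ and $\{\mathcal{O}_0\}\cup\mathcal{C}'_2$, where $\mathcal{C}'_1,\mathcal{C}'_2$ are the even/odd-$X$ halves of $\{X,Y\}^{\otimes n}$. In that construction the distinguished observable is the \emph{generator}: the unique element sharing no single-qubit letter with any other member. Your construction instead realises the same size via the GHZ stabilizer and a local Clifford twist, so that $X^{\otimes n}$ sits among the $2^{n-1}$ ``ordinary'' full-weight elements rather than as the generator (in fact your $\mathcal{A}_1$ coincides with the paper's $\mathcal{C}_1$ for generator $Z^{\otimes n}$, and your $\mathcal{A}_2$ is the paper's context for generator $Y_1Z_2\cdots Z_n$). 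Both prove the lemma as stated. The paper's version has the advantage that it feeds directly into Lemma~\ref{thelemma2} and the count of $2\times 3^n$ distinct contexts, since the generator is provably unique; your association would not by itself give that count, because your two contexts overlap in $2^{n-2}$ full-weight elements rather than just the chosen observable. Your version, on the other hand, makes the stabilizer structure explicit and shows transparently why the eigenstates are GHZ-type.

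\textbf{The slip.} Your distinctness argument asserts that ``every non-identity element of $\mathcal{A}_2$ acts as $X$ or $Y$ at qubit~$1$''. This is false: $Z_2Z_3\in\mathcal{A}_2$ acts as $I$ there, and $X^{\otimes n}\cdot Y_1Z_2\propto Z_1Y_2X_3\cdots X_n$ acts as $Z$. The conclusion $Z_1Z_2\notin\mathcal{A}_2$ is still correct, but the reason is that any element of $\mathcal{A}_2$ with $Z$ on qubit~$1$ must involve the generator $X^{\otimes n}$, forcing qubits $2,\ldots,n$ to carry $X$ or $Y$; hence it cannot equal $Z_1Z_2$. Alternatively, note directly that $Z^{\otimes n}$ is a full-weight element of $\mathcal{A}_1$ absent from $\mathcal{A}_2$. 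Since you ultimately want the full-weight parts (the contexts) to differ, exhibiting a full-weight witness like $Z^{\otimes n}$ is cleaner than arguing via the weight-two element $Z_1Z_2$.
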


\begin{proof}
For even $n$, choose an arbitrary $n$-body Pauli observable $\mathcal{O}_0$, which we will refer to as our \textit{context generator}.
Without loss of generality, consider $\mathcal{O}_0=Z^{\otimes n}$. 
There are $2^n$ $n$-body Pauli observables which share no term with the generator in the same qubit positions,
\begin{equation}
\mathcal{C}'=\lbrace X,Y\rbrace^{\otimes n}.
\end{equation} 
Note that they all commute with $\mathcal{O}_0$ since $n$ is even.
Not all observables in $\mathcal{C}'$ mutually commute, but $\mathcal{C}'$ can be split into two subsets $\mathcal{C}'_1$ and $\mathcal{C}'_2$ of equal size where this is true. In our example, $\mathcal{C}'_1$($\mathcal{C}'_2$) comprises all the observables in $\mathcal{C}'$ with even(odd) number of Pauli $X$. In this manner, two contexts are defined, $\mathcal{C}_1=\{\mathcal{O}_0\} \cup \mathcal{C}'_1$ and $\mathcal{C}_2=\{\mathcal{O}_0\} \cup \mathcal{C}'_2$, of size $2^{n-1}+1$.
\end{proof}

With this method, each generator defines two contexts of size $2^{n-1}+1$, and $3^n$ observables can be used as generators. This would seem to prove the existence of $2 \times 3^n$ contexts, but we need to guarantee that none of them is counted twice. For $n \ge 4$, we prove in the following that all contexts have a unique generator.

\begin{lemma}
For $n$-body Pauli observables with even $n$, contexts of size greater than three can have at most one observable which shares no terms with the rest.
\label{thelemma2}
\end{lemma}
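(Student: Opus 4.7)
The plan is to argue by contradiction via a pigeonhole on the three Pauli letters at each qubit position. Suppose a context $\mathcal{C}$ with $|\mathcal{C}|>3$ contained two distinct $n$-body Pauli observables $P$ and $Q$, each of which shares no term (i.e.\ differs in Pauli letter at every one of the $n$ positions) with every other element of $\mathcal{C}$. Applied to the pair $(P,Q)$ itself, this means $P_i \neq Q_i$ for every $i \in \{1,\dots,n\}$.

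Now pick any third observable $R \in \mathcal{C}\setminus\{P,Q\}$, which exists because $|\mathcal{C}|>3$. By hypothesis, $R$ shares no term with $P$ and no term with $Q$, so at every position $i$ we have simultaneously $R_i \neq P_i$ and $R_i \neq Q_i$. Since there are only three possible Pauli letters $\{X,Y,Z\}$ and $P_i,Q_i$ already occupy two distinct ones, $R_i$ is forced to be the unique remaining letter. Hence $R$ is completely determined by $P$ and $Q$, and in particular every observable of $\mathcal{C}\setminus\{P,Q\}$ must equal this same $R$. This gives $|\mathcal{C}|\le 3$, contradicting $|\mathcal{C}|>3$, so at most one observable of $\mathcal{C}$ can share no term with all the others.

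I would include a one-line sanity check that the argument is tight: for the small context $\{X^{\otimes n}, Y^{\otimes n}, Z^{\otimes n}\}$, which has size exactly $3$ and commutes for even $n$ (any two elements differ at all $n$ positions, an even number), all three elements do share no term with the others, showing that the size bound ``$>3$'' in the hypothesis cannot be weakened. I would also note that evenness of $n$ is only used implicitly, through the earlier observation that two $n$-body Paulis differing at all positions commute; the pigeonhole step itself is purely combinatorial.

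The argument is essentially a one-step pigeonhole, so there is no real obstacle; the only thing to be careful about is the interpretation of ``shares no term,'' which I would state explicitly at the start of the proof as ``differs in Pauli letter at every qubit position,'' matching the usage in the proof of Lemma~\ref{thelemma1}. With that convention in place, the contradiction is immediate, and the lemma then feeds directly into the counting claim that each context of size $2^{n-1}+1$ has a unique generator, so the total number of such maximal contexts is $2\times 3^n$ without double-counting.
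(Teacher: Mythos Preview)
Your proof is correct and uses essentially the same pigeonhole idea as the paper: two observables that differ from everything else at every position already exhaust two of the three Pauli letters at each site, forcing any further element to be the unique third string. The paper phrases this slightly differently---it first observes that three pairwise term-disjoint observables fill all letters at every site, so a fourth distinct observable must coincide with at least two of them somewhere---but this is just the contrapositive of your ``$R$ is uniquely determined'' step, and the underlying combinatorics is identical.
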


\begin{proof}
Consider a set of three $n$-body Pauli observables $\{\mathcal{O}_1,\mathcal{O}_2,\mathcal{O}_3\}$ sharing no terms with each other (in the same qubit positions). This is possible, as each qubit position can be $X$, $Y$ or $Z$. For even $n$, this set commutes, as its observables have even number of changes to each other, and $X$, $Y$ and $Z$ anticommute. If we were to add a fourth different $n$-body Pauli observable $\mathcal{O}_4$ which commuted with the rest (even number of changes), it would necessarily share terms with at least two other observables in the set. This way, constructively, any context with size larger than three can have at most one observable sharing no terms with the rest.
\end{proof}

This implies that for $n \ge 4$ the \textit{context generators} in Lemma  \ref{thelemma1} produce unique contexts. Lemma \ref{thelemma1} and Lemma \ref{thelemma2} prove that:

\begin{theorem}
For $n$-body Pauli observables with even $n \ge 4$, there exist at least $2\times 3^n$ contexts of size $2^{n-1}+1$.
\end{theorem}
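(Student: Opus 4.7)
The plan is to combine the two lemmas into a single counting argument, with the key move being to show that the map from generators to contexts constructed in Lemma~\ref{thelemma1} is injective for $n\ge 4$.

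First, I would invoke Lemma~\ref{thelemma1} to produce, for each of the $3^n$ possible $n$-body Pauli observables $\mathcal{O}_0$, two distinct contexts $\mathcal{C}_1(\mathcal{O}_0)$ and $\mathcal{C}_2(\mathcal{O}_0)$ of size $2^{n-1}+1$, both containing $\mathcal{O}_0$ as the unique observable that shares no Pauli term (in the same qubit position) with any of the other $2^{n-1}$ observables in the context. This structural feature of the construction is the hinge of the argument: inside $\mathcal{C}_i(\mathcal{O}_0)$, every observable other than $\mathcal{O}_0$ belongs to $\{X,Y\}^{\otimes n}$ (up to the relabelling used in Lemma~\ref{thelemma1}), so each of them shares a Pauli term with $\mathcal{O}_0$ only trivially by not sharing it, while pairs among them always share either an $X$ or a $Y$ in some position. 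Thus $\mathcal{O}_0$ is distinguished inside its context.

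Next, I would use Lemma~\ref{thelemma2} to pass from this structural feature to uniqueness of the generator. Since $n\ge 4$ implies $2^{n-1}+1 \ge 9 > 3$, Lemma~\ref{thelemma2} applies and tells us that at most one observable in any such context can share no terms with all the others. Combined with the previous paragraph, this means that the generator $\mathcal{O}_0$ can be recovered unambiguously from the context itself, i.e.\ the map $(\mathcal{O}_0,i)\mapsto \mathcal{C}_i(\mathcal{O}_0)$ with $i\in\{1,2\}$ is injective. Since the two contexts $\mathcal{C}_1(\mathcal{O}_0)$ and $\mathcal{C}_2(\mathcal{O}_0)$ are distinct by construction (one contains only observables with an even number of $X$'s, the other only with an odd number), the map is injective in its second argument too.

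Finally I would count: the domain has size $2\cdot 3^n$ and the map is injective, so its image contains at least $2\cdot 3^n$ distinct contexts of size $2^{n-1}+1$, which is the claim. The main (indeed only) subtle step is verifying that the generator is uniquely identifiable from the context — everything else is bookkeeping — and that step reduces cleanly to applying Lemma~\ref{thelemma2} under the hypothesis $n\ge 4$, which is precisely why the theorem excludes $n=2$.
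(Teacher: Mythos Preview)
Your approach is essentially identical to the paper's: invoke Lemma~\ref{thelemma1} to produce two size-$(2^{n-1}+1)$ contexts per generator, then use Lemma~\ref{thelemma2} (applicable since $2^{n-1}+1>3$ for $n\ge 4$) to conclude that the generator is uniquely recoverable from the context, making the map $(\mathcal{O}_0,i)\mapsto \mathcal{C}_i(\mathcal{O}_0)$ injective and the count $2\times 3^n$ exact.

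One minor slip worth fixing: your claim that ``pairs among them always share either an $X$ or a $Y$ in some position'' is false --- for even $n$, the strings $X^{\otimes n}$ and $Y^{\otimes n}$ both lie in $\mathcal{C}'_1$ and share no position. Fortunately this side remark is unnecessary: once you have established that $\mathcal{O}_0$ shares no terms with every other element of the context (which follows directly from the construction), Lemma~\ref{thelemma2} alone guarantees that no second observable can have this property, so $\mathcal{O}_0$ is uniquely determined without any further check on the pairs in $\mathcal{C}'_i$.
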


In the case of $n=2$, different generators can produce the same contexts, so there are only $6$ contexts instead of $2\times 3^n$. In the case of $n=4$, the method in Lemma~\ref{thelemma1} constructs $2\times 3^n=162$ contexts of size $2^{n-1}+1=9$, which are the ones used in the implementation section, but partitioning the system into two $2$-qubit subsystems (which can be done in $3$ ways) and using the known $n=2$ contexts yields additional contexts of size $9$, namely $3\times 6 \times 6$ of them, to a total of $270$. Fortunately, this double source for maximum-sized contexts happens exclusively in $n=4$. For $n\ge 6$ there are exactly $2\times 3^n$ contexts of size $2^{n-1}+1$.

Each context has an associated basis of $2^n$ unique context eigenstates. For generator $\mathcal{O}_0=Z^{\otimes n}$, one of the context eigenstates of $\mathcal{C}_1$ is the $n$-qubit GHZ state $\tfrac{1}{\sqrt{2}}(\ket{00\ldots 0}+\ket{11\ldots 1})$, and one of the context eigenstates of $\mathcal{C}_2$ is the $n$-qubit $i$-phase GHZ state $\tfrac{1}{\sqrt{2}}(\ket{00...0}+i \ket{11...1})$. The rest of their bases can be obtained with local Pauli gates $X$ and $Z$. For any other generator, the context eigenstates of generator $\mathcal{O}_0=Z^{\otimes n}$ can be rotated into the corresponding basis. For instance, $\tfrac{1}{\sqrt{2}}(\ket{LLLL}+\ket{RRRR})$ and $\tfrac{1}{\sqrt{2}}(\ket{LLLL}+i \ket{RRRR})$ are eigenstates of contexts generated by $Y_1 Y_2 Y_3 Y_4$. This allows us to build a discrete-parameter circuit which can generate any context eigenstate (Fig. \ref{maincircuit}).}

{

\section{Implementation}
\label{sec:Implementation}

In the following, we construct a toy implementation of our QRAC for $n=4$ qubits. Alice grants random access to a data of $m=(3^n-1)/2=40$ bits by sending a selection of $4$-qubit states to Bob. We choose this data to be binary digits of $\pi/4$. The implementation consists in finding states from the quantum circuit in Fig. \ref{maincircuit} with statistics that collectively match the data. However, recall that the mapping between data and preferred parities is not straightforward. Every bit is associated to a pair of (four-body) Pauli measurements over the selection of states; if the pair yields equal preferred parities, the encoded bit is 0, and vice versa. Thus, we must first decide a configuration of preferred parities that we will try to target with the selection of states. This way, we have a two-step optimization: 1) choosing target preferred parities and 2) building up a selection of states to encode it. Then, we analyze the probability distributions of the resulting state selection and compute the retrieval sampling requirement.

\subsection{Target preferred parities}

To find convenient preferred parities for the encoding, we must perform an optimization on the phase space of compatible parities to our data. We use alphabetical order for forming observable couples $\mathcal{G}=\{\{\{X_1 X_2 X_3 X_4,X_1 X_2 X_3 Y_4\},...\}$. Each of the $40$ couples can encode a bit in one out of two ways, so there are $2^{40} \approx 10^{12}$ compatible preferred parities to our data. In other words, one compatible configuration can be expressed with $40$ bits (too). We do not explore them all, but instead stochastically optimize this $40$-bitstring by doing random bit flips and maximizing a scoring function. 

We build the scoring function based on the Hamming distance (the number of differences) between eigenstate well-defined outcomes, i.e. their eigenvalues, and compatible preferred parities. For this, we first compute the eigenvalues of all context eigenstates. For $n=4$, we use $2\times 3^n=162$ contexts (of size $9$), and each one has $2^n=16$ eigenstates, so there are $2592$ context eigenstates in total. Computing all of their eigenvalues is not efficient for general $n$, but can be done for $n=4$. For our scoring function, we count the number of eigenstates with Hamming distance $i$ ($0 \le i \le 9$, as contexts are size $9$) and denote it by $\mathcal{M}_i$. The scoring function is computed as 
\begin{equation}
\mathcal{L}=\sum_i w_i \mathcal{M}_i \text{,}
\end{equation}
using the weights $w_0=100$, $w_1=10$, $w_2=1$, and $w_i=0$ for $i \ge 3$. Consequently, the scoring function $\mathcal{L}$ primarily rewards compatible configurations having several context eigenstates with perfect match, and then rewards eigenstates with lower number of matches. The result of this optimization is a configuration of $81$ target preferred parities, although arguably one observable is uncoupled, so its parity does not really matter.

\subsection{Eigenstate selection}

We need to build a selection of context eigenstates $\{\ket{\psi_i}\}_i$ which matches the target preferred parity that we obtained in the previous optimization. We denote the number of states in this selection by $N_\text{s}$. For $n=4$, there are $2592$ context eigenstates, so an arbitrary selection of them has a very large number of configurations. We can reduce this phase space by considering the best states from each context (the ones with most matches to the target preferred parity), to an eligible pool of $O(162)$ states, which is more manageable. Afterwards, we minimize the size of the selection $N_\text{s}$ and a sampling requirement metric based on the corresponding mixed state. As noted before, the retrieval protocol of our QRAC does not use mixed states, i.e. classical random ensembles of pure states, but rather Bob receives states which he can access individually. However, $\rho=\tfrac{1}{N_\text{s}}\sum_i \ket{\psi_i} \bra{\psi_i}$ can still serve to compute a sampling requirement metric, like the one in Eq. \eqref{eq:SR_mixed}. We use that metric, and for computation efficiency take the normal approximation, obtaining

\begin{equation}
S_\text{mix}(p_\mathcal{O},f)\approx\left[2 \left( \text{Erf} ^{-1} (2 f - 1)\right)^2\right] \frac{p_\mathcal{O}(1-p_\mathcal{O})}{(p_\mathcal{O}-1/2)^2}\text{,}
\label{SRnormal}
\end{equation}
where $\text{Erf} ^{-1}(\cdot)$ is the inverse of the error function, and $p_\mathcal{O}$ is the higher value of the probability distribution of observable $\mathcal{O}$ in $\rho$. Since we only need a metric for the overall probability-distribution values, we omit the term in brackets, and then use as metric the average for all observables $\mathcal{O}$. Furthermore, to avoid diverging values when $p_\mathcal{O}$ approaches $1/2$, we take a large number $u$ and use it as boundary. We also assign $u$ to our metric when the higher probability is not on the target parity. This way, for probability $p_\mathcal{O}$ of the target parity in observable $\mathcal{O}$, we have
\begin{equation}
        S_\text{mix}'=
        \left\{ \begin{array}{ll}
            \text{min}\left(u\text{, }\frac{ p_\mathcal{O} (1-p_\mathcal{O})}{(p_\mathcal{O}-1/2)^2} \right) & p_\mathcal{O}>1/2 \\
            u & p_\mathcal{O} \le 1/2
        \end{array} \right.
        \label{SRmetric} \text{,}
\end{equation}
and our sampling requirement metric is its average over all observables, $\langle S_\text{mix}'\rangle$.

Now we obtain the state selection cost function by multiplying the two quantities: 
\begin{equation}
        \mathscr{L}= N_\text{s} \langle S_\text{mix}'\rangle \text{.}
\end{equation}
This cost function allows to obtain a small selection of states which minimizes the retrieval sampling requirement. We stochastically train in the phase space of possible state selections, which is a bitstring of size $O(162)$ indicating whether each of the eligible states is present in the selection. We employ random bit-flips to train this bitstring, taking the smaller $\mathscr{L}$ at each step and artificially maintaining a sparse bitstring. We also occasionally branch into several lines and skip the cost function test, to avoid local minima. Finally, we consider variations $\mathscr{L}_i$ of the cost function, where the $i$ worst observables are not taken into account while computing the sampling requirement. This allows us to obtain state selections with different tolerance to error.

\subsection{Implementation results}

A target preferred parity was obtained from the first optimization, using scoring function $\mathcal{L}$. Then, using the state selection cost function $\mathscr{L}_i$ for a varying tolerance to error $0\le i\le 6$, we obtain state selections with sizes $9 \le N_\text{s}\le 14$. Afterwards, we compute performance metrics for encoding success probability and for retrieval sampling requirement, using negligible retrieval errors. The results are shown in Table \ref{tablepi4}.

\begin{table}
\begin{center}
  \begin{tabular}{ l  l  l  l  l }
    \Xhline{2\arrayrulewidth}
    \text{$N_\text{s}$\,\,\,\,\,\,} & \text{\,\,\,\,\,\textbf{$f^\text{e}$}\,\,\,\,\,\,\,} &\,$T$\,\,\,\,& \text{\,\,\,\,\,\,\,$\nu$}\,\,\,\,\,\,\,\,\,\,\, & \textbf{\text{\,$S$}}\,\,\,\, \\ \hline
    $9$ & $75/81$ &$11$& $0.0078$ & $35$ \\ 
    $11$ & $76/81$ &$12$& $0.0048$ & $44$ \\ 
    $12$ & $78/81$ &$13$& $0.0026$ & $49$ \\ 
    $13$ & $80/81$ &$14$& $0.0014$ & $55$ \\ 
    $14$ & $81/81$ &$14$& $0.0015$ & $59$ \\ \Xhline{2\arrayrulewidth}
  \end{tabular}
\end{center}
\vspace*{-0.3cm}
\caption{Performance metrics for a QRAC trained to encode $40$ binary digits of $\pi/4$ into $N_\text{s}$ context eigenstates of $4$ qubits, via the statistics of $81$ Pauli observables. We consider $9\le N_\text{s}\le 14$; $N_\text{s}=9$ is the least states that could cover all observables with well-defined parities (contexts of $n=4$ are size $9$), and $N_\text{s}=14$ is the smallest instance of perfect encoding. The main performance metrics are the encoding success probability $f^\text{e}$ and the retrieval sampling requirement $S$. To compute the latter, we consider a negligible retrieval noise $\nu$ (probability of not filtering an unbiased state) with respect to encoding errors by setting a number of retrieval steps $T$. The $N_\text{s}=14$ is an exception, as it already has perfect encoding, so we set the same $T$ as the previous case.} \label{tablepi4}
\end{table}

The metric $f^\text{e}$ is the fraction of observables which match the target preferred parity, and can be interpreted as the encoding success probability. The rest of the metrics are related to retrieval performance. We briefly explain them in the following. For retrieval in an arbitrary Pauli observable $\mathcal{O}$, the number of well-defined states is $N_\mathcal{O}\approx \mathcal{P}_\mathcal{O} N_\text{s}$, where $\mathcal{P}_\mathcal{O}=(2^{n-1}+1)/3^n=9/81$ is the fraction of observables in a given context. This way, $N_\mathcal{O}\approx N_\text{s}/9$. The retrieval consists in measuring the states on $\mathcal{O}$ during $T$ steps, filtering out those with unbiased statistics. The expected number of unbiased states which fail to be filtered is $\nu=(N_\text{s}-N_\mathcal{O})/2^{T-1}$, and the total number of states measured is $S\approx 3(N_\text{s}-N_\mathcal{O})+T N_\mathcal{O}$ (sampling requirement). We choose $T$ such that $\nu$ is negligible with respect to encoding errors, $\nu\ll (1-f^\text{e})$; specifically choosing $\nu \approx (1-f^\text{e})/10$ in this problem. An exception is made for the $N_\text{s}=14$ case, as its encoding is perfect, so the previous $T$ for $N_\text{s}=13$ is maintained. 

With this implementation, we have shown a simplified training process to store data into $n$-body Pauli statistics, allowing random access by measuring those observables. We were able to reach a perfect encoding for $40$ bits with as few as $14$ different states of $4$ qubits, or imperfect encodings with less states. However, in all cases, the total number of resources sent is greater than the original data, so for $n=4$ our proposal is technically not a QRAC. In the following section, we develop a statistical model to calculate the performance of our proposal for larger $n$.}

\section{Statistical extrapolation}
\label{sec:Statisticalextrapolation}

In this section, we measure the efficiency of this protocol via a statistical analysis of both the retrieval sampling requirement and the success probability. This is done for arbitrary number of qubits $n$, and in particular for even and large $n \gtrsim 10$ values.

\vspace{0.5cm}
\subsection{State selection size}
First, we estimate the number $N_\text{s}$ of different context eigenstates that Alice can use in the encoding, {in terms of how perfect she wants them to be. We consider $N_\mathcal{C}=2 \times 3^n$ contexts of size $d_{\mathcal{C}}=2^{n-1}+1$, each with $2^n$ unique context eigenstates. Then, we model the number of matches between arbitrary context eigenstates and arbitrary preferred parities using a binomial distribution
\begin{equation}
B(d_{\mathcal{C}}, 1/2 ; \cdot)\text{,}
\end{equation}
since each well-defined parity of a context eigenstate has probability $1/2$ of matching an arbitrary preferred parity. Then, we model the number of eigenstates with up to $h$ mismatches in their context as
\begin{equation}
B(2^n N_\mathcal{C}, p_h ; \cdot)\text{,}
\label{numbereigen}
\end{equation}
where $p_h=\sum_{k=0}^h B(d_{\mathcal{C}}, 1/2 ; k)$ is the probability that an eigenstate has up to $h$ mismatches, and it can be lower-bounded by $p_h'= { d_{\mathcal{C}} \choose{h}} 2^{-d_{\mathcal{C}}}$}. This defines the eligible states given some mismatch tolerance $\varepsilon=h/d_{\mathcal{C}}<1/2$. Then, we take into account the $2$-to-$1$ encoding, which causes $2^m$ configurations of preferred parities to be compatible with the data, where $m=(3^n-1)/2$. {Each compatible configuration is described by one instance of the statistical model in~\eqref{numbereigen}.} We compute $N_\text{s}$ as the largest number of eligible states (up to mismatch tolerance $\varepsilon$) between compatible configurations. {This can be done by finding the number $N_\text{s}$ such that the expected number of compatible configurations with that many eligible states is $1$, i.e.}
\begin{equation}
2^m B(2^n N_\mathcal{C}, p'_h ; N_\text{s}) = 1\text{.}
\end{equation}
Applying Stirling's approximation and keeping the dominant term in $n$ yields
\begin{equation}
N_\text{s}(n,\varepsilon)=g(\varepsilon)^{-1} \left(\frac{3}{2}\right)^n\text{,}
\label{ns}
\end{equation}
where $g(\varepsilon)=1-\frac{1-\varepsilon}{\ln(2)}\sum_{i=1}^\infty \frac{{\varepsilon}^i}{i}-\varepsilon \log_2(1/\varepsilon)$.
Notice that $N_\text{s}\to (3/2)^n$ when $\varepsilon\to 0$. {Also note that $N_\text{s}$ cannot be greater than $2\times 6^n$ (for $n\ge 4$), as we are considering $2\times 3^n$ contexts with $2^n$ eigenstates each. This imposes a restriction of $\varepsilon<0.474$, which softens as $n$ increases above $4$, but which poses no problem.}

\subsection{Sampling requirement}
Let us now compute the retrieval sampling requirement. Given a process to retrieve {the parity of an arbitrary observable $\mathcal{O}$ with $T$ measurement steps}, half of the unbiased-parity states are discarded at each step after the first one, so the total number of measurements performed on undefined-parity states is $(N_\text{s}-N_\mathcal{O})(1+\sum_{i=0}^{T-2} 2^{-i}) \approx 3(N_\text{s}-N_\mathcal{O})$, whereas the $N_\mathcal{O}$ well-defined states are never discarded and are measured $T$ times. Then, the total number of states used to measure one observable is
\begin{equation}
S(n,\varepsilon,N_\mathcal{O},T) \approx 3(N_\text{s}(n,\varepsilon)-N_\mathcal{O})+T N_\mathcal{O}\text{,}
  \label{Msamples}
\end{equation}
{though we can also upper-bound the sampling requirement with $T N_\text{s}$, considering that $T$ is the maximum number of copies needed per state. This bound has the advantage of being applicable to measurements with a context of observables instead of just a single one (commuting observables can share the same samples).}

We have yet to set an adequate value for $T$. In the process of measuring undefined-parity states, an average of $\nu=(N_\text{s}-N_\mathcal{O})/2^{T-1}$ states fail to be filtered out, which can be interpreted as the probability of having a single unfiltered state if $\nu \ll 1$, i.e. a retrieval noise probability. 
By inverting this relation, we determine $T=\log_2(N_\text{s}-N_\mathcal{O})-\log_2(\nu)+1$. Note that $\nu$ can be set to a negligible value with respect to encoding errors without increasing $S$ {(or $T N_\text{s}$)} too much. In that case, virtually no unbiased states are mistakenly labeled as well-defined states while computing the preferred parity of $\mathcal{O}$. This way, the overall success probability depends only on encoding errors, and not on any undefined state that failed to be filtered during retrieval.

\subsection{Success probability} 
{The value of $1-\varepsilon$ can be used to lower-bound the probability of one parity matching between one state of the selection and the desired preferred parity. We use it to compute the encoding success probability of our set of $N_\text{s}$ states. For this end,} let $\mathcal{O}$ be an arbitrary $n$-body Pauli observable whose preferred parity we want to retrieve from the set of states. Statistically, most of the selected states have an unbiased parity in this observable. As noted before, unbiased states are later filtered out by the retrieval protocol, so only well-defined parities matter for a successful encoding. {Given $N_\text{s}$ random context eigenstates, we say that $N_\mathcal{O}$ are well-defined in $\mathcal{O}$. Considering that $\mathcal{P}_\mathcal{O}=(2^{n-1}+1)/3^n\approx (1/2)(2/3)^n$ is the fraction of observables contained in any context, $N_\mathcal{O}$ can be modeled by $B(N_\text{s}, \mathcal{P}_\mathcal{O};\cdot)$, and from Eq.~\eqref{ns} we note that $\langle N_\mathcal{O} \rangle=\mathcal{P}_\mathcal{O} N_\text{s} \approx (1/2) g(\varepsilon)^{-1}$, which does not depend on $n$.} If the majority of the $N_\mathcal{O}$ well-defined states yield the target parity, the encoding has been successful. This way, we compute the encoding success probability $f^\text{e}$ as a function of $n$ and $\varepsilon$: 

\begin{equation}
f^\text{e}=
\sum_{i= \lfloor N_\mathcal{O}/2 \rfloor+1}^{N_\mathcal{O}} B(N_\mathcal{O},1-\varepsilon;\,i) ,
\label{rretrieval}
\end{equation} 

\noindent for odd $N_\mathcal{O}$, {with an additional term $B(N_\mathcal{O},1-\varepsilon;\,N_\mathcal{O}/2)/2$ in even cases. 

Notably, since $\langle N_\mathcal{O} \rangle \approx (1/2) g(\varepsilon)^{-1}$ is independent of $n$, it seems intuitive that the success probability is also independent of $n$. To be certain, we numerically compute the expected success probability $\langle f^\text{e} \rangle$ for encoding one preferred parity using $N_\mathcal{O} \sim B(N_\text{s}, \mathcal{P}_\mathcal{O};\cdot)$. Since a data bit is retrieved successfully when both of its associated parities are correct or both are wrong, the expected success probability for encoding each data bit is $\langle \bar{f}^\text{e} \rangle=(\langle f^\text{e} \rangle)^2+(1-\langle f^\text{e} \rangle)^2$. This expression does not depend on the variance of $\bar{f}^\text{e}$ due to $N_\mathcal{O}$ being mostly independent for each observable.}

\begin{figure}

\centering
\hspace*{-0.3cm}
\includegraphics[width=.41\textwidth]{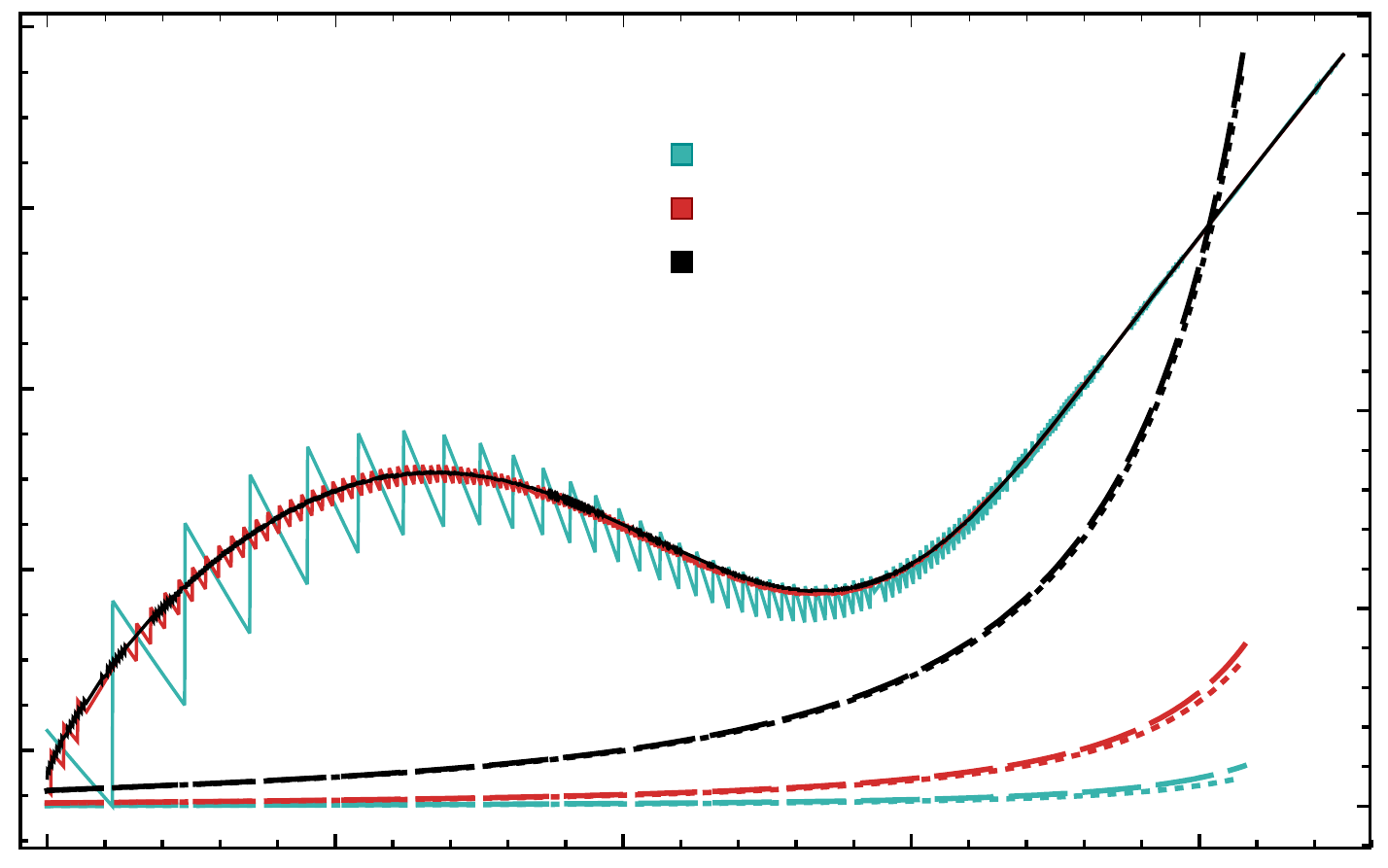}\hfill
\put(-227,49){\begin{sideways} $\langle \bar{f}^\text{e} \rangle(n,\varepsilon)$ \end{sideways}}
\put(13,19){\begin{sideways} $S(n,\varepsilon),~3 N_\text{s}(n,\varepsilon)$ \end{sideways}}
\put(-96,99.8){$n=4$}
\put(-96,92.0){$n=8$}
\put(-96,83.5){$n=12$}
\put(-197.55,-5){{\footnotesize $0.0$}}
\put(-156.25,-5){{\footnotesize $0.1$}}
\put(-114.95,-5){{\footnotesize $0.2$}}
\put(-73.65,-5){{\footnotesize $0.3$}}
\put(-32.35,-5){{\footnotesize $0.4$}}
\put(-100,-15){$\varepsilon$}
\put(-1,118.9){{\footnotesize $2.10^4$}}
\put(-1,63){{\footnotesize $10^4$}}
\put(-1,6.3){{\footnotesize $0$}}
\put(-213,117){{\footnotesize $0.66$}}
\put(-213,65.6){{\footnotesize $0.62$}}
\put(-213,14){{\footnotesize $0.58$}}

\caption{Left axis: Expected success probability $\langle \bar{f}^\text{e} \rangle$ for encoding a data bit (solid line) for $n=4,8,12$ (cyan, red, and black, respectively), as a function of the encoding mismatch tolerance $\varepsilon$. {Notably, there is an asymptotic $\langle \bar{f}^\text{e} \rangle (\varepsilon)$ curve as $n$ increases.} Right axis: number of required samples $S$ (dashed line) for negligible retrieval noise $\nu=2^{-6} \ll (1-\langle f^\text{e} \rangle)$, and eigenstate selection size $N_\text{s}$ multiplied by $3$ (dotted line), for same system sizes and as a function of $\varepsilon$. Notice that $S$ and $3 N_\text{s}$ are approximately equal, implying that in average $3$ measurements are required per eigenstate (see Eq. \eqref{Msamples}).}
\label{dualplots}
\end{figure}

{The result of this computation is shown in Fig.~\ref{dualplots}, where we provide $\langle \bar{f}^\text{e} \rangle$ as a function of $\varepsilon$, for $n=4,8,12$.} Discontinuities in $\langle \bar{f}^\text{e} \rangle$ tend to vanish for large number of selected states $N_\text{s}$, such that it converges into a {single smooth curve} for large number of qubits ($n \gtrsim 8$) and large mismatch tolerances ($\varepsilon \gtrsim 0.3$), as they both increase the number of selected eigenstates. {Most importantly, for large $n$, the success probability reaches an asymptotic value defined only by $\varepsilon$ and higher than $0.5$. This allows to reach any success probability with a fixed number of repetitions, as we show in the next subsection.}
On a second note, when increasing $\varepsilon$, the increase of the number of selected eigenstates favors higher success probability but at the same time hinders it because {the states have a poorer resemblance to the target preferred parities.} 
This dual behaviour explains the local minima of $\langle \bar{f}^\text{e} \rangle$.

Notice that the local maxima of $\langle \bar{f}^\text{e} \rangle$ is not necessarily the optimal $\varepsilon$, as the sampling requirement must be taken into account too. {If we want a fair comparison between $\varepsilon$ values, we need to fix either the success probability or the sampling requirement and compare the other value in terms of $\varepsilon$. We do this by introducing repetitions of the code.}

\subsection{Repetitions} In order to obtain the mismatch tolerance $\varepsilon$ that minimizes the sampling requirement given a fixed success probability, we increase the success probability by means of protocol repetitions with majority-rule, fixing it to a constant value among all $\varepsilon$. This allows us to compare sampling requirements (modified by the repetitions) and calculate the optimal mismatch tolerance $\varepsilon^\star$. In general, a code with success probability $\bar{f}$ repeated $r$ times yields an amplified success probability
\begin{equation}
\mathcal{F}=\sum_{j=(r+1)/2}^r B(r,\bar{f};j)\text{,}
  \label{amplification}
\end{equation}
assuming odd $r$ for simplicity.
Thus, minimizing the sampling requirement for fixed $\mathcal{F}$ (through $r$ repetitions) yields an optimal mismatch tolerance ${\varepsilon}^\star=0.0480$ ($\pm 0.0034$) for $n\ge 16$.

{For $n\ge 16$ and using the optimal $\varepsilon$, the average encoding success probability in our code is $\langle \bar{f}^\text{e} \rangle (\varepsilon^\star)=0.5983$, and also acts as general success probability, due to the negligible retrieval noise $\nu \ll (1- f^\text{e})$. Considering that this success probability is low but independent of $n$, we note that we can apply protocol repetitions not just as a mathematical artifice to optimize $\varepsilon$, but also as an actual tool for the protocol. Alice could use a different $\mathcal{G}$ ordering on each iteration to guarantee statistical independence for the encodings, and Bob would apply a majority-rule between the iterations to retrieve each data bit. This way, for $n\ge 16$, $r=243$ repetitions suffice to achieve final success probability $\mathcal{F}=0.999$. Furthermore, to obtain an asymptotic relationship between $r$ and $\mathcal{F}$, the former can be isolated from Eq. \eqref{amplification} by taking the normal approximation. Using the success probability $\langle \bar{f}^\text{e} \rangle (\varepsilon^\star)$, this yields $r \approx 49.7\, (\text{Erfc}^{-1}(2-2 \mathcal{F}))^2$, where $\text{Erfc}^{-1}(\cdot)$ is the inverse complementary error function. Considering that $\text{Erfc}^{-1}(y)\approx\sqrt{\ln(1/y)}$ for $y \ll 1$, we obtain $r\sim O(\ln(1/(1-\mathcal{F})))$, i.e. infinite repetitions for the limit $\mathcal{F}\rightarrow 1^{-}$, as is to be expected.}

{Now that the final success probability $\mathcal{F}$ is defined, we substitute $\varepsilon^\star$ to re-calculate all parameters related to the sampling requirement. With $\varepsilon^\star$, the size of the state selection becomes
\begin{equation}
N_\text{s}(n)=g(\varepsilon^\star)^{-1} \left(\frac{3}{2}\right)^n=1.385 \left(\frac{3}{2}\right)^n \text{.}
  \label{Nstar}
\end{equation}}
Then, with $\varepsilon^\star$, the number of retrieval steps can be upper-bounded by 
\begin{equation}
T(n)=n \log_2 (3/2) + 9 \text{,}
  \label{Tretrieval}
\end{equation}
{using $\nu= 2^{-7.5}\ll (1-f^\text{e})$, and considering $\log_2(N_\text{s}-N_\mathcal{O})\approx \log_2(N_\text{s})$ due to $\langle N_\mathcal{O} \rangle= 0.6925 \ll N_\text{s}$}.
Recall that $N_\text{s} T$ bounds the total number of states used in the code, even if a full context of observables is measured in the retrieval.

Equivalently, with $\varepsilon^\star$, we can express the expected sampling requirement for retrieval of one observable (Eq. \eqref{Msamples}) as a function of $n$ alone,
{\begin{equation}
S(n) =4.155 \left(\frac{3}{2}\right)^n + n \log_2(3/2)+6\text{,}
  \label{Moptimal}
\end{equation}
where we have rounded up $N_\mathcal{O}=1$ (from $\langle N_\mathcal{O} \rangle= 0.6925$) to be on the safe side.

\vspace*{0.5cm}
\section{Wrapping up the QRAC}
\label{sec:WrappinguptheQRAC}

This way, we can summarize the whole $m \rightarrow k$ QRAC, including the repetitions, as a function of $n$ and the success probability $\mathcal{F}$. Alice has a data of 
\begin{equation}
m=(3^n-1)/2 \sim O(3^n)
\end{equation}
bits that she wants Bob to randomly access (one context at a time). To grant this access she sends at most $N_\text{s}(n) T(n) \sim O(n (3/2)^n)$ states of $n$ qubits, and repeats this $r\sim O(\ln(1/(1-\mathcal{F})))$ times to reach a final success probability of $\mathcal{F}$. At most, the total number of states sent is $N=r N_\text{s} T$ and the total number of qubits sent is $n$ times that:
\begin{equation}
k= r n N_\text{s}(n) T(n) \sim O\left(n^2 \left(\frac{3}{2}\right)^n \ln\left(\tfrac{1}{1-\mathcal{F}}\right)\right)\text{.}
\end{equation}
Here, the final success probability $\mathcal{F}$ imposes a fixed number of repetitions, but it does not scale with $n$, thus the number of sent resources $k$ scales with a lower order than the original data size $m$. We can choose an arbitrarily-high success probability $\mathcal{F}$ and show that the compression ratio $k/m$ of this QRAC is of order $O(n^2/2^n)$, or $O(m^{-0.63})$ in terms of the original data size. This way, setting $r=243$ for success probability $\mathcal{F}=0.999$, Alice sends less resources than her original data for $n \ge 18$, achieving compression. If instead we set no repetitions, the success probability is $\langle \bar{f}^\text{e} \rangle\approx 0.5983$ and compression is achieved for $n \ge 10$.}

\section{Comparison}
\label{sec:Comparison}

{To compare our $m\rightarrow k$ QRAC against other codes, we consider the protocol before the repetitions, for simplicity. This means that the number of resources sent by Alice is $k=n N_s(n) T(n) \sim O(n^2 (3/2)^n)$ and the success probability is $\langle \bar{f}^\text{e} \rangle =0.5983$. Like before, $m=(3^n-1)/2$ and the compression ratio is $k/m\sim O(n^2 /2^n)$. As mentioned earlier, we can assume the data encoded in the code to already be in the lossless compression limit of maximum entropy. This way, any additional lossless compression is quantum advantage, and any additional lossy compression with greater success probability than classical RACs is also quantum advantage.

\subsection{Comparison against RACs}

To perform this comparison, we need to calculate the success probability of an equivalent $m\rightarrow k$ random access code. This can be done partitioning the problem into $k$ RACs of the form $\tfrac{m}{k}\rightarrow~1$. Each one would send $1$ bit and grant random access to $m/k$ bits, with average success probability greater than $1/2$. The original data in each of these smaller RACs does not overlap, so it is sufficient to calculate the success probability of a single one of them. This has previously been calculated in \cite{ambainis2008quantum}. In the following, we obtain the same result with a slightly different procedure, specifically tailored to our problem.

To compute the success probability of a $\tfrac{m}{k}\rightarrow~1$ RAC we use the \textit{most-common-bit} approach which was used in the $3\rightarrow 1$ RAC. Essentially, the single communicated bit indicates what is the most common bit in the original data, and this information allows us to use that bit as guess for any of the $m/k$ bits. We can model the total number of $1$s in an arbitrary $\tfrac{m}{k}$-bit configuration with a binomial distribution $B(m/k,1/2;\cdot)$. If the communicated bit is $0$($1$), it means that our current configuration lies in the first(second) half of the distribution (assume odd $m/k$ for simplicity). This way, the expected number of $0$s($1$s) in the configuration is obtained from the expectation of the renormalized first(second) half of the distribution. Dividing this result by the bitstring length $m/k$, we obtain the probability of an arbitrary bit in the configuration being $0$($1$), that is, the success probability $f_\text{RAC}^{m/k\rightarrow 1}$. Its value is the same for either communicated bit $0$ or $1$:
\begin{equation}
f_\text{RAC}^{m/k\rightarrow 1}= \left[ \sum_{i= \lfloor (m/k)/2 \rfloor+1}^{m/k}2 \, i \, B(m/k,1/2;\,i) \right] /(\tfrac{m}{k}) \text{.}
\end{equation}
Note that this yields the known success probability $f_\text{RAC}^{m/k\rightarrow 1}=0.75$ for $m/k=3$.

For large $m/k$ we can obtain a cleaner expression with the normal approximation.
\begin{equation}
f_\text{RAC}^{m/k\rightarrow 1}= \left[ \int_{\tfrac{m/k}{2}}^{\infty} 2 x \, N \left(\tfrac{m/k}{2},\, \tfrac{\sqrt{m/k}}{2}; \, x \right) dx \right] /(\tfrac{m}{k}) \text{,}
\end{equation}
where $N(\mu,\sigma;\, \cdot)$ is the normal distribution with mean $\mu$ and standard deviation $\sigma$. When solved, this yields
\begin{equation}
f_\text{RAC}^{m/k\rightarrow 1}= \frac{1}{2}+\sqrt{\frac{k/m}{2 \pi}}\text{,}
\label{eqfrac}
\end{equation}
in agreement with \cite{ambainis2008quantum}.

Our QRAC has asymptotic success probability (per bit) $\langle \bar{f}^\text{e} \rangle=0.5983$. It is greater than RAC success probability in Eq. \eqref{eqfrac} for $k/m<0.0607$, that is, $n \ge 14$ when substituting $k$ and $m$ in terms of $n$. This is because RAC success probability decays exponentially fast in $n$, with $f_\text{RAC}^{m/k\rightarrow 1}=1/2+O\left(\tfrac{n}{2^{n/2}}\right)$. In short, our QRACs are better than RACs when employing systems of $n \ge 14$ qubits.


\subsection{Comparison against QRACs}

In contrast to a RAC, a $m \rightarrow k$ QRAC cannot be partitioned into $k$ QRACs of the form $\tfrac{m}{k} \rightarrow 1$ without losing significant quantum advantage. This is because QRACs are fundamentally defined by a number of measurement-basis choices, which in general grow exponentially with the number of qubits of the sent state(s). This is both true for MUBs or $n$-body Pauli observables.
Ambainis et. al. \cite{ambainis2008quantum} have shown lower and upper bounds for the success probability of $m' \rightarrow 1$ QRACs. Setting $m'=m/k$ for ease of comparison, these bounds are: 

\begin{equation}
\frac{1}{2}+\sqrt{\frac{k/m}{(3/2) \pi}} \le f_\text{QRAC}^{m/k\rightarrow 1} \le \frac{1}{2}+\frac{\sqrt{k/m}}{2}\text{,}
\label{qracbound}
\end{equation}
where the lower bound is greater than $f_\text{RAC}^{m/k\rightarrow 1}$. Furthermore, building a $m\rightarrow k$ QRAC with $k$ QRACs of the form $\tfrac{m}{k}\rightarrow~1$ shows that our QRAC proposal has greater success probability than this upper bound when $n \ge 16$. This further proves that partitioning a QRAC into smaller pieces is suboptimal.

\begin{table}
\centering
\begin{tabular}{ | c || c | c |  }
 \hline
 \multicolumn{3}{|c|}{\raisebox{-.5\height}{\textbf{QRAC} ${(d+1)\log_2(d) \rightarrow \log_2(d)}$}} \\[5pt]
 \hline
 \raisebox{-.5\height}{\;\;\;\;\;$d$\;\;\;\;\;}& \raisebox{-.4\height}{\;\;\;\;\;\;\;$f_\text{dit}$\;\;\;\;\;\;\;}&\raisebox{-.4\height}{\;\;\;\;\;\;\;$f_\text{bit}$\;\;\;\;\;\;\;}\\[5pt]
 \hline
 $2$&$0.789$&$0.789$\\
 $3$&$0.637$& $0.7524$\\
 $4$&$0.5424$&$0.7365$\\
 $5$&$0.4700$&$0.7224$\\
 $7$&$0.3720$&$0.7031$\\
 $8$&$0.3372$&$0.6960$\\
 \hline
\end{tabular}\par
\caption{Success probability $f_\text{dit}$ for $d$-dimensional QRACs, of the form ${(d+1)\log_2(d) \rightarrow \log_2(d)}$ (in $\text{bit}\!\!\rightarrow\!\!\text{qubit}$ notation), as obtained in \cite{casaccino2008extrema} for all prime-power $d\le 8$. Since the original success probability is calculated for guessing dits ($d$-levels), we additionally compute a metric for bit success probability, $f_\text{bit}=(f_\text{dit})^{1/\log_2(d)}$.}
\label{MUBQRAC}
\end{table}

Thus, the success probability $m\rightarrow k$ QRACs with high-dimensional quantum systems is still unbounded, and this is precisely the family of QRACs which can exploit the exponential increase of measurement-basis options. To the best of our knowledge, the best precedent we have of high-dimensional QRACs in the literature is the $(d+1)\log_2(d) \rightarrow \log_2(d)$ QRAC of Casaccino et. al. \cite{casaccino2008extrema}, which employs MUBs and has been computed for $2 \le d \le 8$ (skipping $d=6$ as it is not a prime-power). They show decaying success probabilities as $d$ increases, but these probabilities are for retrieving a full dit, which has zero-knowledge success probability of $1/d$, something that already decays in $d$. For a fair comparison of success  probability between different values of $d$ (and against all other RACs and QRACs we have mentioned so far), we use their dit success probability $f_\text{dit}$ to estimate a bit success probability $f_\text{bit}=(f_\text{dit})^{1/\log_2(d)}$. As we can see, this function transforms zero-knowledge success probability ($d$-level) to zero-knowledge success probability ($2$-level) $f_\text{dit}=1/d \rightarrow f_\text{bit}=(1/d)^{\log_2(d)}=1/2$ for any $d\ge 2$ and also transforms $f_\text{dit}=1 \rightarrow f_\text{bit}=1$. We apply it on the results of their QRAC (Table \ref{MUBQRAC}), and note that an asymptotic behaviour of bit success probability greater than $1/2$ cannot be discarded until further study. We also note that the bit success probabilities on the $d=4$ and $d=8$ cases, which correspond to $10 \rightarrow 2$ and $27 \rightarrow 3$ QRACs, slightly breaks the upper bound in Eq. \eqref{qracbound} for $5 \rightarrow 1$ and $9 \rightarrow 1$ QRACs, respectively. Once again, this shows that higher dimensional QRACs provide an improvement, and that partitioning QRACs (e.g. partitioning a $27 \rightarrow 3$ QRAC into three $9 \rightarrow 1$ QRACs) reduces their advantage.

These MUB-QRACs showcase a compression ratio of $1/(d+1)\sim O(1/2^n)$, exponential in $n$ when considering $n$-qubit systems, like our proposed QRAC. For now, our approach has the advantage of having a confirmed asymptotic bit success probability (greater than $1/2$), and of encoding the data into states which can be produced with low-depth quantum circuits (Fig. \ref{maincircuit}). Finally, the mathematical complexity of working with MUBs for large $n$ is a drawback to also take into account ($d=8$ is $3$ qubits, and it already is hard), whereas $n$-body Paulis are readily-defined, and their context construction is provided for any $n$ in this paper.


\section{Applications}
\label{sec:Applications}

Besides its large compression ratio (small $k/m$), the $m \rightarrow k$ QRAC presented in this paper has a distinguishing feature from previous QRACs: its success probability can be amplified (by repetitions) to near-perfect values while still being a random access code, that is, while encoding the original data into \textit{less} information units.

Where's the catch? The amount of information that is randomly accessed by Bob must be much smaller than the original data, even if it can be any fragment of the data. This is certainly a limitation, but leaves room for applications where a large amount of data needs to be stored and only-partially queried. For instance, as exemplified in \cite{ambainis1999dense}, one could encode the contents of an entire telephone directory and then, via a suitably chosen measurement, look up any single telephone number of choice. Or, using a more modern example, encode everyone's Google Drive data and, via suitable measurements, retrieve any single file. Indeed, Google Drive has approximately $1$ billion users \cite{googledrivesize} with $\sim 100$ Gigabytes each ($800$ Giga-bits), and the amount of data that our QRAC can randomly access is $m=(3^n-1)$ bits, in terms of the size $n$ of quantum states used. This way, systems of $n=44$ qubits are sufficient to randomly access such volumes of information, with a compression ratio $k/m=5.75 \times 10^{-8}$ for near-perfect bit success probability $\mathcal{F}=0.999$.

Our proposed QRAC is also a fitting solution to store and consult decision trees, i.e. brute-force strategies. The total size of a decision tree scales exponentially fast, but only one branch needs to be consulted at a time if we want to put it to use. For instance, a brute-force solution for chess would imply storing a database with what move to make on each possible board configuration. Chess has an upper bound of $2^{155}$ existing board configurations~\cite{tromp2010john} and a branching factor of $\sim35$ possible plays per turn~\cite{levinovitz2014mystery}, thus there exist $35^{2^{155}}$ brute-force strategies (all possible moves for all possible board configurations). Storing one of them amounts to $2.34 \times 10^{47}$ bits. Our QRAC would need systems of $n=100$ qubits to store this volume of information, and would reach a compression ratio of $k/m=3.52 \times 10^{-24}$ for near-perfect bit success probability $\mathcal{F}=0.999$.

\section{Conclusion and outlook}
\label{sec:Conclusionandoutlook}

We have proposed a $m \rightarrow k$ Quantum Random Access Code (QRAC) which grants random access to an $m$-bit data by sending relatively few quantum states of high-dimension ($n$-qubit states). In terms of $n$, the original data has $m\sim O(3^n)$ bits and the total number of qubits sent is $k\sim O(n^2 (3/2)^n)$ for an arbitrarily high bit-retrieval success probability $\mathcal{F}$, or $k\sim O(n^2 (3/2)^n \ln(1/(1-\mathcal{F})))$ in terms of it. This means that the compression ratio between sent resources and original data is $k/m\sim O(n^2/2^n)$, or $O(m^{-0.63})$ in terms of the original data size $m$. This is the first instance of a QRAC being calculated on the asymptotic limit of large quantum systems.

Instead of using Mutually Unbiased Bases (MUBs) like previous proposals, we based our QRAC on the family of $n$-body Pauli observables grouped into different measurement contexts, and encoded our data using the most convenient eigenstates to these contexts. We showed that these states can be generated with linear-depth quantum circuits, and provided a toy implementation on $n=4$, storing $40$ binary digits of $\pi/4$ into quantum states. Notably, the current main limitation to large-$n$ implementation is not the quantum hardware, but classical training routines, which we have implemented simplistically.

Contrasting against previous literature, our QRAC has no advantage for low number of qubits, but improves upon any other random access code for large $n$. We obtain greater success probability than classical random access codes for $n\ge 14$, greater success probability than other quantum random access codes for $n\ge 16$, and can amplify the success probability to near-perfect values ($\mathcal{F}=0.999$) for $n\ge 18$ without ceasing to be a QRAC (less resources sent than held originally).

Finally, we note on the applicability of our QRAC to problems where large data is to be stored and accessed with small queries. Everyone's Google Drive data, equivalent to $m\sim O(10^{20})$ bits, could be stored using systems of $n=44$ qubits and one file (any file) be retrieved using $O(10^{-8})$ times less resources than $m$. Also, decision trees are fitting problems for our QRAC, as the trees can be very large but require querying for a single branch. We take as example a brute-force solution for chess, equivalent to $m\sim O(10^{47})$ bits of data, which can be stored using systems of $n=100$ systems. A game of chess could be optimally played using $O(10^{-24})$ times less resources than $m$.

In future research lines, classical optimization routines to find the best context eigenstate for the encoding should be improved in order to scale up the implementation from a toy model of four-qubit systems to the relevant $n\ge 14$ regime in which Contextual QRACs show advantage.
}

\textit{Acknowledgements.--} 
The authors acknowledge financial support from the Basque Government through Grant No. IT1470-22 and from the Basque Government QUANTEK project under the ELKARTEK program (KK-2021/00070), the Spanish Ram\'{o}n y Cajal Grant No. RYC-2020-030503-I, and project Grant No. PID2021-125823NA-I00 funded by MCIN/AEI/10.13039/501100011033 and by ``ERDF A way of making Europe'' and ``ERDF Invest in your Future'', as well as from OpenSuperQ (820363) of the EU Flagship on Quantum Technologies, and the EU FET-Open projects Quromorphic (828826) and EPIQUS (899368). 




%


\begin{thebibliography}{21}%
\makeatletter
\providecommand \@ifxundefined [1]{%
 \@ifx{#1\undefined}
}%
\providecommand \@ifnum [1]{%
 \ifnum #1\expandafter \@firstoftwo
 \else \expandafter \@secondoftwo
 \fi
}%
\providecommand \@ifx [1]{%
 \ifx #1\expandafter \@firstoftwo
 \else \expandafter \@secondoftwo
 \fi
}%
\providecommand \natexlab [1]{#1}%
\providecommand \enquote  [1]{``#1''}%
\providecommand \bibnamefont  [1]{#1}%
\providecommand \bibfnamefont [1]{#1}%
\providecommand \citenamefont [1]{#1}%
\providecommand \href@noop [0]{\@secondoftwo}%
\providecommand \href [0]{\begingroup \@sanitize@url \@href}%
\providecommand \@href[1]{\@@startlink{#1}\@@href}%
\providecommand \@@href[1]{\endgroup#1\@@endlink}%
\providecommand \@sanitize@url [0]{\catcode `\\12\catcode `\$12\catcode
  `\&12\catcode `\#12\catcode `\^12\catcode `\_12\catcode `\%12\relax}%
\providecommand \@@startlink[1]{}%
\providecommand \@@endlink[0]{}%
\providecommand \url  [0]{\begingroup\@sanitize@url \@url }%
\providecommand \@url [1]{\endgroup\@href {#1}{\urlprefix }}%
\providecommand \urlprefix  [0]{URL }%
\providecommand \Eprint [0]{\href }%
\providecommand \doibase [0]{http://dx.doi.org/}%
\providecommand \selectlanguage [0]{\@gobble}%
\providecommand \bibinfo  [0]{\@secondoftwo}%
\providecommand \bibfield  [0]{\@secondoftwo}%
\providecommand \translation [1]{[#1]}%
\providecommand \BibitemOpen [0]{}%
\providecommand \bibitemStop [0]{}%
\providecommand \bibitemNoStop [0]{.\EOS\space}%
\providecommand \EOS [0]{\spacefactor3000\relax}%
\providecommand \BibitemShut  [1]{\csname bibitem#1\endcsname}%
\let\auto@bib@innerbib\@empty
\bibitem {shannon1948mathematical}%
  \BibitemOpen
  \bibfield  {author} {\bibinfo {author} {\bibfnamefont {C.~E.}\ \bibnamefont
  {Shannon}},\ }\bibfield  {title} {\textit{\bibinfo {title} {A mathematical
  theory of communication}},\ }\href {\doibase 10.1002/j.1538-7305.1948.tb01338.x} {\bibfield  {journal} {\bibinfo
  {journal} {The Bell system technical journal}\ }\textbf {\bibinfo {volume}
  {27}},\ \bibinfo {pages} {379--423} (\bibinfo {year} {1948})}\BibitemShut
  {NoStop}%
  %
\bibitem {huffman2012fundamentals}%
  \BibitemOpen
  \bibfield  {author} {\bibinfo {author} {\bibfnamefont {W.~C.}\ \bibnamefont
  {Huffman}}\ and\ \bibinfo {author} {\bibfnamefont {V.}~\bibnamefont
  {Pless}},\ }\href@noop {} {\emph {\bibinfo {title} {Fundamentals of
  error-correcting codes}}} \ (\bibinfo  {publisher} {Cambridge University
  Press},\ \bibinfo {year} {2012})\BibitemShut {NoStop}%
  %
\bibitem {al2008novel}%
  \BibitemOpen
  \bibfield  {author} {\bibinfo {author} {\bibfnamefont {H.}~\bibnamefont
  {Al-Bahadili}},\ }\bibfield  {title} {\textit {\bibinfo {title} {A novel
  lossless data compression scheme based on the error correcting {Hamming}
  codes}},\ }\href {\doibase 10.1016/j.camwa.2007.11.043} {\bibfield  {journal} {\bibinfo  {journal}
  {Computers \& Mathematics with Applications}\ }\textbf {\bibinfo {volume}
  {56}},\ \bibinfo {pages} {143--150} (\bibinfo {year} {2008})}\BibitemShut
  {NoStop}%
  %
\bibitem {calderbank1996good}%
  \BibitemOpen
  \bibfield  {author} {\bibinfo {author} {\bibfnamefont {A.~R.}\ \bibnamefont
  {Calderbank}}\ and\ \bibinfo {author} {\bibfnamefont {P.~W.}\ \bibnamefont
  {Shor}},\ }\bibfield  {title} {\textit {\bibinfo {title} {Good quantum
  error-correcting codes exist}},\ }\href {\doibase 10.1103/PhysRevA.54.1098}
  {\bibfield  {journal} {\bibinfo  {journal} {Phys. Rev. A}\ }\textbf {\bibinfo
  {volume} {54}},\ \bibinfo {pages} {1098--1105} (\bibinfo {year}
  {1996})}\BibitemShut {NoStop}%
  %
\bibitem {steane1996error}%
  \BibitemOpen
  \bibfield  {author} {\bibinfo {author} {\bibfnamefont {A.~M.}\ \bibnamefont
  {Steane}},\ }\bibfield  {title} {\textit {\bibinfo {title} {Error correcting
  codes in quantum theory}},\ }\href {\doibase 10.1103/PhysRevLett.77.793}
  {\bibfield  {journal} {\bibinfo  {journal} {Phys. Rev. Lett.}\ }\textbf
  {\bibinfo {volume} {77}},\ \bibinfo {pages} {793--797} (\bibinfo {year}
  {1996})}\BibitemShut {NoStop}%
  %
\bibitem {rozema2014quantum}%
  \BibitemOpen
  \bibfield  {author} {\bibinfo {author} {\bibfnamefont {L.~A.}\ \bibnamefont
  {Rozema}}, \bibinfo {author} {\bibfnamefont {D.~H.}\ \bibnamefont {Mahler}},
  \bibinfo {author} {\bibfnamefont {A.}~\bibnamefont {Hayat}}, \bibinfo
  {author} {\bibfnamefont {P.~S.}\ \bibnamefont {Turner}}, \ and\ \bibinfo
  {author} {\bibfnamefont {A.~M.}\ \bibnamefont {Steinberg}},\ }\bibfield
  {title} {\textit {\bibinfo {title} {Quantum data compression of a qubit
  ensemble}},\ }\href {\doibase 10.1103/PhysRevLett.113.160504} {\bibfield
  {journal} {\bibinfo  {journal} {Phys. Rev. Lett.}\ }\textbf {\bibinfo
  {volume} {113}},\ \bibinfo {pages} {160504} (\bibinfo {year}
  {2014})}\BibitemShut {NoStop}%
  %
\bibitem {gottesman1996class}%
  \BibitemOpen
  \bibfield  {author} {\bibinfo {author} {\bibfnamefont {D.}~\bibnamefont
  {Gottesman}},\ }\bibfield  {title} {\textit {\bibinfo {title} {Class of
  quantum error-correcting codes saturating the quantum {Hamming} bound}},\
  }\href {\doibase 10.1103/PhysRevA.54.1862} {\bibfield  {journal} {\bibinfo
  {journal} {Phys. Rev. A}\ }\textbf {\bibinfo {volume} {54}},\ \bibinfo
  {pages} {1862--1868} (\bibinfo {year} {1996})}\BibitemShut {NoStop}%
  %
\bibitem {kitaev2003fault}%
  \BibitemOpen
  \bibfield  {author} {\bibinfo {author} {\bibfnamefont {A.~Y.}\ \bibnamefont
  {Kitaev}},\ }\bibfield  {title} {\textit {\bibinfo {title} {Fault-tolerant
  quantum computation by anyons}},\ }\href {\doibase 10.1016/S0003-4916(02)00018-0} {\bibfield  {journal}
  {\bibinfo  {journal} {Annals of Physics}\ }\textbf {\bibinfo {volume}
  {303}},\ \bibinfo {pages} {2--30} (\bibinfo {year} {2003})}\BibitemShut
  {NoStop}%
  %
\bibitem {peres2006quantum}%
  \BibitemOpen
  \bibfield  {author} {\bibinfo {author} {\bibfnamefont {A.}\ \bibnamefont
  {Peres}},\ }\href@noop {} {\emph {\bibinfo {title} {Quantum theory: {Concepts} and {Methods}}}} \ (\bibinfo  {publisher} {Springer Science \& Business Media},\ \bibinfo {year} {2006})\BibitemShut {NoStop}%
  %
\bibitem {bennett1993teleporting}%
  \BibitemOpen
  \bibfield  {author} {\bibinfo {author} {\bibfnamefont {C.~H.}\ \bibnamefont
  {Bennett}}, \bibinfo {author} {\bibfnamefont {G.}~\bibnamefont {Brassard}},
  \bibinfo {author} {\bibfnamefont {C.}~\bibnamefont {Cr{\'e}peau}}, \bibinfo
  {author} {\bibfnamefont {R.}~\bibnamefont {Jozsa}}, \bibinfo {author}
  {\bibfnamefont {A.}~\bibnamefont {Peres}}, \ and\ \bibinfo {author}
  {\bibfnamefont {W.~K.}\ \bibnamefont {Wootters}},\ }\bibfield  {title}
  {\textit {\bibinfo {title} {Teleporting an unknown quantum state via dual
  classical and {Einstein-Podolsky-Rosen} channels}},\ }\href {\doibase 10.1103/PhysRevLett.70.1895}
  {\bibfield  {journal} {\bibinfo  {journal} {Phys. Rev. Lett.}\
  }\textbf {\bibinfo {volume} {70}},\ \bibinfo {pages} {1895} (\bibinfo {year}
  {1993})}\BibitemShut {NoStop}%
  %
\bibitem {bennett1992communication}%
  \BibitemOpen
  \bibfield  {author} {\bibinfo {author} {\bibfnamefont {C.~H.}\ \bibnamefont
  {Bennett}}\ and\ \bibinfo {author} {\bibfnamefont {S.~J.}\ \bibnamefont
  {Wiesner}},\ }\bibfield  {title} {\textit {\bibinfo {title} {Communication
  via one-and two-particle operators on {Einstein-Podolsky-Rosen} states}},\
  }\href {\doibase 10.1103/PhysRevLett.69.2881} {\bibfield  {journal} {\bibinfo  {journal} {Phys. Rev. Lett.}\ }\textbf {\bibinfo {volume} {69}},\ \bibinfo {pages} {2881}
  (\bibinfo {year} {1992})}\BibitemShut {NoStop}%
%
\bibitem {bennett2002entanglement}%
  \BibitemOpen
  \bibfield  {author} {\bibinfo {author} {\bibfnamefont {C.~H.}~\bibnamefont
  {Bennett}}, \bibinfo {author} {\bibfnamefont {P.~W.}\ \bibnamefont
  {Shor}}, \bibinfo {author} {\bibfnamefont {J.~A.}\ \bibnamefont
  {Smolin}}, \ and\ \bibinfo {author} {\bibfnamefont {A.~V.}~\bibnamefont
  {Thapliyal}},\ }\bibfield  {title} {\textit {\bibinfo {title} {Entanglement-assisted capacity of a quantum channel and the reverse Shannon theorem}},\ }\href {\doibase 10.1109/TIT.2002.802612} {\bibfield
  {journal} {\bibinfo  {journal} {IEEE transactions on Information Theory}\ }\textbf {\bibinfo
  {volume} {48.10}},\ \bibinfo {pages} {2637--2655} (\bibinfo {year}
  {2002})}\BibitemShut {NoStop}%
   %
  \bibitem {wiesner1983conjugate}%
  \BibitemOpen
  \bibfield  {author} {\bibinfo {author} {\bibfnamefont {S.}~\bibnamefont
  {Wiesner}},\ }\bibfield
  {title} {\textit {\bibinfo {title} {Conjugate coding}},\ } \href {https://doi.org/10.1145/1008908.1008920} {\bibfield
  {journal} {\bibinfo  {journal} {ACM Sigact News}\ }\textbf {\bibinfo
  {volume} {15(1)}},\ \bibinfo {pages} {78--88} (\bibinfo {year}
  {1983})}\BibitemShut {NoStop}%
  %
\bibitem {ambainis1999dense}%
  \BibitemOpen
  \bibfield  {author} {\bibinfo {author} {\bibfnamefont {A.}~\bibnamefont
  {Ambainis}}, \bibinfo {author} {\bibfnamefont {A.}~\bibnamefont {Nayak}},
  \bibinfo {author} {\bibfnamefont {A.}~\bibnamefont {Ta-Shma}}, \ and\
  \bibinfo {author} {\bibfnamefont {U.}~\bibnamefont {Vazirani}},\ }\bibfield
  {title} {\textit {\bibinfo {title} {Dense quantum coding and a lower bound
  for 1-way quantum automata}},\ }in\ \href {\doibase 10.1145/301250.301347} { {\bibinfo
  {booktitle} {Proceedings of the thirty-first annual ACM symposium on Theory
  of Computing}}\ (\bibinfo {year} {1999})\ pp.\ \bibinfo {pages}
  {376--383}}\BibitemShut {NoStop}
  %
  \bibitem {ambainis2002dense}%
  \BibitemOpen
  \bibfield  {author} {\bibinfo {author} {\bibfnamefont {A.}~\bibnamefont
  {Ambainis}}, \bibinfo {author} {\bibfnamefont {A.}~\bibnamefont {Nayak}},
  \bibinfo {author} {\bibfnamefont {A.}~\bibnamefont {Ta-Shma}}, \ and\
  \bibinfo {author} {\bibfnamefont {U.}~\bibnamefont {Vazirani}},\ }\bibfield
  {title} {\textit {\bibinfo {title} {Dense quantum coding and quantum finite automata}},\ } \href {https://doi.org/10.1145/581771.581773} {\bibfield
  {journal} {\bibinfo  {journal} {Journal of the ACM (JACM)}\ }\textbf {\bibinfo
  {volume} {49(4)}},\ \bibinfo {pages} {496--511} (\bibinfo {year}
  {2002})}\BibitemShut {NoStop}%
 %
\bibitem {pawlowski2010entanglement}%
  \BibitemOpen
  \bibfield  {author} {\bibinfo {author} {\bibfnamefont {M.}~\bibnamefont
  {Paw{\l}owski}}\ and\ \bibinfo {author} {\bibfnamefont {M.}~\bibnamefont
  {{\.Z}ukowski}},\ }\bibfield  {title} {\textit {\bibinfo {title}
  {Entanglement-assisted random access codes}},\ }\href {\doibase 10.1103/PhysRevA.81.042326} {\bibfield
  {journal} {\bibinfo  {journal} {Phys. Rev. A}\ }\textbf {\bibinfo
  {volume} {81}},\ \bibinfo {pages} {042326} (\bibinfo {year}
  {2010})}\BibitemShut {NoStop}%
  %
\bibitem [{\citenamefont {Casaccino}\ \emph {et~al.}(2008)\citenamefont
  {Casaccino}, \citenamefont {Galv\~ao},\ and\ \citenamefont
  {Severini}}]{casaccino2008extrema}%
  \BibitemOpen
  \bibfield  {author} {\bibinfo {author} {\bibfnamefont {A.}~\bibnamefont
  {Casaccino}}, \bibinfo {author} {\bibfnamefont {E.~F.}\ \bibnamefont
  {Galv\~ao}}, \ and\ \bibinfo {author} {\bibfnamefont {S.}~\bibnamefont
  {Severini}},\ }\bibfield  {title} {\textit {\bibinfo {title} {Extrema of
  discrete {Wigner} functions and applications}},\ }\href {\doibase 10.1103/PhysRevA.78.022310} {\bibfield
  {journal} {\bibinfo  {journal} {Phys. Rev. A}\ }\textbf {\bibinfo
  {volume} {78}},\ \bibinfo {pages} {022310} (\bibinfo {year}
  {2008})}\BibitemShut {NoStop}%
  %
\bibitem [{\citenamefont {Tavakoli}\ \emph {et~al.}(2015)\citenamefont
  {Tavakoli}, \citenamefont {Hameedi}, \citenamefont {Marques},\ and\
  \citenamefont {Bourennane}}]{tavakoli2015quantum}%
  \BibitemOpen
  \bibfield  {author} {\bibinfo {author} {\bibfnamefont {A.}~\bibnamefont
  {Tavakoli}}, \bibinfo {author} {\bibfnamefont {A.}~\bibnamefont {Hameedi}},
  \bibinfo {author} {\bibfnamefont {B.}~\bibnamefont {Marques}}, \ and\
  \bibinfo {author} {\bibfnamefont {M.}~\bibnamefont {Bourennane}},\ }\bibfield
   {title} {\textit {\bibinfo {title} {Quantum random access codes using
  single d-level systems}},\ }\href {\doibase 10.1103/PhysRevLett.114.170502} {\bibfield  {journal} {\bibinfo
  {journal} {Phys. Rev. Lett.}\ }\textbf {\bibinfo {volume} {114}},\
  \bibinfo {pages} {170502} (\bibinfo {year} {2015})}\BibitemShut {NoStop}%
\bibitem [{\citenamefont {Pauwels}\ \emph {et~al.}(2022)\citenamefont
  {Pauwels}, \citenamefont {Pironio}, \citenamefont {Woodhead},\ and\
  \citenamefont {Tavakoli}}]{pauwels2022almost}%
  \BibitemOpen
  \bibfield  {author} {\bibinfo {author} {\bibfnamefont {J.}~\bibnamefont
  {Pauwels}}, \bibinfo {author} {\bibfnamefont {S.}~\bibnamefont {Pironio}},
  \bibinfo {author} {\bibfnamefont {E.}~\bibnamefont {Woodhead}}, \ and\
  \bibinfo {author} {\bibfnamefont {A.}~\bibnamefont {Tavakoli}},\ }\bibfield
   {title} {\textit {\bibinfo {title} {Almost qudits in the prepare-and-measure scenario}},\ }\href {\doibase 10.1103/PhysRevLett.129.250504}
  {\bibfield  {journal} {\bibinfo  {journal} {Phys. Rev. Lett.}\ }\textbf {\bibinfo
  {volume} {129}, }\bibinfo {pages} {250504} (\bibinfo {year} {2022})}\BibitemShut
  {NoStop}
  %
\bibitem [{\citenamefont {Wootters}\ \emph {et~al.}(1989),\ and\
  \citenamefont {Fields}}]{wootters1989optimal}%
  \BibitemOpen
  \bibfield  {author} {\bibinfo {author} {\bibfnamefont {W. K.}~\bibnamefont
  {Wootters}}, \ and\
  \bibinfo {author} {\bibfnamefont {B. D.}~\bibnamefont {Fields}},\ }\bibfield
   {title} {\textit {\bibinfo {title} {Optimal state-determination by mutually unbiased measurements}},\ }\href {http://dx.doi.org/10.1016/0003-4916(89)90322-9} {\bibfield  {journal} {\bibinfo
  {journal} {Annals of Physics}\ }\textbf {\bibinfo {volume} {191(2)}},\
  \bibinfo {pages} {363--381} (\bibinfo {year} {1989})}\BibitemShut {NoStop}    
%
\bibitem [{\citenamefont {Ambainis}\ \emph {et~al.}(2008)}]{ambainis2008quantum}%
  \BibitemOpen
  \bibfield  {author} {\bibinfo {author} {\bibfnamefont {A.}~\bibnamefont
  {Ambainis}}, \bibinfo {author} {\bibfnamefont {D.}~\bibnamefont {Leung}}, \bibinfo {author} {\bibfnamefont {L.}~\bibnamefont {Mancinska}}, \ and\
  \bibinfo {author} {\bibfnamefont {M.}~\bibnamefont {Ozols}},\ }\bibfield
   {title} {\textit {\bibinfo {title} {Quantum random access codes with shared randomness}},\ }\href {\doibase 10.48550/arXiv.0810.2937}
  {\bibfield  {journal} {\bibinfo  {journal} {arXiv}\ }\textbf {\bibinfo
  {volume} {0810.2937}}\bibinfo {pages} {} (\bibinfo {year} {2009})}\BibitemShut {NoStop}
%
\bibitem [{\citenamefont {Nielsen}\ and\ \citenamefont
  {Chuang}(2010)}]{nielsen2010quantum}%
  \BibitemOpen
  \bibfield  {author} {\bibinfo {author} {\bibfnamefont {M.~A.}\ \bibnamefont
  {Nielsen}}\ and\ \bibinfo {author} {\bibfnamefont {I.~L.}\ \bibnamefont
  {Chuang}},\ }\href@noop {} {\emph {\bibinfo {title} {Quantum Computation
  and Quantum Information}}} \ (\bibinfo  {publisher} {Cambridge University
  Press},\ \bibinfo {year} {2010})\BibitemShut {NoStop}%
  %
\bibitem [{\citenamefont {Cheng}\ \emph {et~al.}(2018)\citenamefont {Cheng},
  \citenamefont {Chen},\ and\ \citenamefont {Wang}}]{cheng2018information}%
  \BibitemOpen
  \bibfield  {author} {\bibinfo {author} {\bibfnamefont {S.}~\bibnamefont
  {Cheng}}, \bibinfo {author} {\bibfnamefont {J.}~\bibnamefont {Chen}}, \ and\
  \bibinfo {author} {\bibfnamefont {L.}~\bibnamefont {Wang}},\ }\bibfield
  {title} {\textit {\bibinfo {title} {Information perspective to probabilistic
  modeling: {Boltzmann} machines versus {Born} machines}},\ }\href {\doibase 10.3390/e20080583}
  {\bibfield  {journal} {\bibinfo  {journal} {Entropy}\ }\textbf {\bibinfo
  {volume} {20}},\ \bibinfo {pages} {583} (\bibinfo {year} {2018})}\BibitemShut
  {NoStop}%
  %
\bibitem [{\citenamefont {Lardinois}(2018)}]{googledrivesize}%
  \BibitemOpen
  \bibfield  {author} {\bibinfo {author} {\bibfnamefont {F.}~\bibnamefont
  {Lardinois}},\ }\bibfield  {title} {\textit {\bibinfo {title} {Google drive will hit a billion users this
  week}},\ }\href {https://techcrunch.com/2018/07/25/google-drive-will-hit-a-billion-users-this-week/}
  {\bibfield  {journal} {\bibinfo  {journal} {TechCrunch}\ } (\bibinfo
  {year} {2018})}\BibitemShut {NoStop}%
  %
\bibitem [{\citenamefont {Tromp}(2010)}]{tromp2010john}%
  \BibitemOpen
  \bibfield  {author} {\bibinfo {author} {\bibfnamefont {J.}~\bibnamefont
  {Tromp}},\ }\href {https://tromp.github.io/chess/chess.html} {\textit {\bibinfo {title} {John's chess
  playground}},\ } (\bibinfo {year} {2010})\BibitemShut {NoStop}%
  %
\bibitem [{\citenamefont {Levinovitz}(2014)}]{levinovitz2014mystery}%
  \BibitemOpen
  \bibfield  {author} {\bibinfo {author} {\bibfnamefont {A.}~\bibnamefont
  {Levinovitz}},\ }\bibfield  {title} {\textit {\bibinfo {title} {The mystery
  of Go, the ancient game that computers still can't win}},\ }\href {https://www.wired.com/2014/05/the-world-of-computer-go/}
  {\bibfield  {journal} {\bibinfo  {journal} {Wired Business}\ } (\bibinfo
  {year} {2014})}\BibitemShut {NoStop}%
  %
\end{thebibliography}
\end{document}